\documentclass[letterpaper, 10 pt, conference]{ieeeconf}  
\IEEEoverridecommandlockouts                             
\overrideIEEEmargins                                     
\pdfobjcompresslevel=0
\pdfminorversion=4
\usepackage[dvipsnames]{xcolor}

\usepackage{amssymb} 
\usepackage{amsmath}
\usepackage{amsthm}
\usepackage{xcolor}
\theoremstyle{plain}
\newtheorem{theorem}{Theorem}
\newtheorem{lemma}[theorem]{Lemma}
\newtheorem{remark}{Remark}

\usepackage{graphicx}

\usepackage{cite}

\usepackage[dvipsnames]{xcolor}
\newcommand{\remove}[1]{}

\usepackage{subcaption}
\usepackage{hyperref}
\usepackage{etoolbox}

\AtBeginEnvironment{equation}{\footnotesize} 
\AtBeginEnvironment{align}{\footnotesize} 
\usepackage[font=footnotesize,labelfont=bf]{caption}

\usepackage{soul} 
\title{\LARGE \bf
Adaptive Deadlock Avoidance for Decentralized Multi-agent Systems via CBF-inspired Risk Measurement
}

\author{Yanze Zhang$^{1}$, Yiwei Lyu$^{2}$, Siwon Jo$^{3}$, Yupeng Yang$^{3}$, and Wenhao Luo$^{1}$
\thanks{$^*$This work was supported in part by the U.S. National Science Foundation under Grant CNS-2312465.}
\thanks{$^{1}$The authors are with the Department of Computer Science, University of Illinois Chicago, Chicago, IL 60607, USA. Email: {\tt\small \{yzhan361, wenhao\}@uic.edu}}
\thanks{$^{2}$The author is with the Department of Electrical and Computer Engineering, Carnegie Mellon University,
        Pittsburgh, PA 15213, USA. Email:
        {\tt\small yiweilyu@andrew.cmu.edu}}
\thanks{$^{3}$The authors are with the Department of Computer Science, University of North Carolina at Charlotte, Charlotte, NC 28223, USA. Email:
        {\tt\small \{sjo2, yyang52\}@charlotte.edu}}
}

\begin{document}
\maketitle
\thispagestyle{empty}
\pagestyle{empty}

\begin{abstract}
Decentralized safe control plays an important role in multi-agent systems given the scalability and robustness without reliance on a central authority. However, without an explicit global coordinator, the decentralized control methods are often prone to deadlock ---  a state where the system reaches equilibrium, causing the robots to stall. In this paper, we propose a generalized decentralized framework that unifies the Control Lyapunov Function (CLF) and Control Barrier Function (CBF) to facilitate efficient task execution and ensure deadlock-free trajectories for the multi-agent systems. As the agents approach the deadlock-related undesirable equilibrium, the framework can detect the equilibrium and drive agents away before that happens. This is achieved by a secondary deadlock resolution design with an auxiliary CBF to prevent the multi-agent systems from converging to the undesirable equilibrium. To avoid dominating effects due to the deadlock resolution over the original task-related controllers, a deadlock indicator function using CBF-inspired risk measurement is proposed and encoded in the unified framework for the agents to adaptively determine when to activate the deadlock resolution. This allows the agents to follow their original control tasks and seamlessly unlock or deactivate deadlock resolution as necessary, effectively improving task efficiency. We demonstrate the effectiveness of the proposed method through theoretical analysis, numerical simulations, and real-world experiments.
\end{abstract}

\section{Introduction}
With the desirable properties including efficiency and scalability, the multi-agent systems are well-suited for a wide range of applications across various scenarios, e.g., sensor coverage \cite{luo2018adaptive}, environment exploration \cite{banfi2016asynchronous}, warehouse management \cite{bolu2021adaptive}, and search and rescue \cite{baxter2007multi}. 
Decentralized control methods where each agent independently makes decisions by taking into account its local information and interactions with its neighboring agents have gained increasing attention recently. However, without a global coordinator, the decentralized control policy may inevitably introduce the undesired equilibria, where the agents would 
stop before reaching their goals and remain stalled unless intervention occurs,
often referred to as \emph{deadlock} \cite{grover2023before, wang2017safety, celi2019deconfliction}.

To address the issue of deadlock, some
existing approaches have attempted to introduce minor perturbations into the system state or control inputs \cite{zhu2023safety, liu2023realtime, renault2024multi, martinez2024avocado}
before or when
the deadlock happens. However, these methods can not provide a definitive assurance against the ocurrence/recurrence
of deadlock \cite{grover2023before}.
A right-hand heuristic rule is proposed in \cite{celi2019deconfliction}, where
the quasi-deadlock condition is present to detect deadlock. However, the design of the if-else conditions would decrease the system's efficiency, whereas a framework around the original task-related controllers that can detect and fix the deadlock in a minimally invasive manner may be more desirable. 

Other methods analyze the deadlock conditions by utilizing the Karush-Kuhn-Tucker (KKT) conditions for the multi-agent system. The work in
\cite{grover2023before, grover2021deadlock} characterize the deadlock as a force equilibrium and introduce a Proportional-Derivative (PD) control law as a mechanism for deadlock resolution. However, it is noteworthy that the theoretical validity of this approach is confined to scenarios involving no more than three agents. Using the similar idea of force equilibrium, in \cite{chen2022deadlock} an
infinite-horizon model predictive control with deadlock resolution is proposed to prevent deadlocks before they occur.
Additionally,
\cite{reis2020control} summarizes deadlock conditions with detailed proof of the equilibrium points' existence and stability, and subsequently proposes a formal framework to resolve the deadlock. 
\cite{cavorsi2023multi} provides a new representation for the free space where deadlock-prone areas are blocked off by analyzing the KKT conditions.
However, the purpose of avoiding necessary conditions to preclude undesired deadlock regardless of when that may occur could dominate the robots' behaviors from the starting stage, and thus unnecessarily disturb their original task execution.
\cite{desai2023auxiliary} extended this method by integrating an auxiliary control effort into a novel Tracking-CBF-QP framework, which aids in avoiding deadlocks and improves trajectory tracking performance. Nevertheless, it remains challenging 
for individual agents to determine when deadlock may happen and when to interfere in a coordinated manner for a proper trade-off between deadlock avoidance and original task execution.

In this paper, we present a novel decentralized framework that ensures both efficient task execution and deadlock resolution for multi-agent systems. In particular, we propose a deadlock indicator function for individual agent that utilizes a
CBF-inspired risk measurement to determine when deadlock resolution is needed. Then by incorporating such indicator function design into a unified decentralized control framework, it allows the robots to (i) operate using their original task-related controllers when they are far away from each other or no deadlock will happen even if they are closer, and (ii) seamlessly switch to the deadlock resolution controller only when necessary.
Instead of functioning as a separate design, the proposed deadlock resolution controller is synthesized into the unified decentralized control framework, whose activation depends on the deadlock condition quantified by our deadlock indicator function. This enables co-optimization between the task-related controller and deadlock resolution controller, yielding smoother robots' motion with improved overall task execution efficiency.
The main contributions are summarized below:
\begin{itemize}
    \item A deadlock indicator function with CBF-inspired risk measurement is proposed to adaptively determine when the deadlock resolution needs to be activated/deactivated;
    \item A novel generalized decentralized multi-agent control framework is proposed to
    seamlessly balance deadlock avoidance and task execution;
    \item The detailed theoretical analysis and proof are provided to justify the properties of the proposed approach.
    Simulation and real-world experiment results are also given to demonstrate the performance of the algorithms.
\end{itemize}

\section{Backgrounds}
\subsection{Preliminaries}

Consider a decentralized multi-agent system, where $N$ agents working in a shared 
$k$-dimension
workspace and the dynamics of each agent $i \in \mathcal{I} = \{1, \cdots, N\}$ in control affine form can be described as follows.
\begin{equation} \label{eq:ctlAffine}
    \dot{\mathbf{x}}_i = f(\mathbf{x}_i) + g(\mathbf{x}_i) \mathbf{u}_i
\end{equation}
where $\mathbf{x}_i \in \mathcal{X}_i \subset \mathbb{R}^d$ and $\mathbf{u}_i \in \mathcal{U}_i \subset \mathbb{R}^m$ are the agent's system state and control input, respectively. 
$f: \mathbb{R}^d \mapsto \mathbb{R}^d$ and 
$g: \mathbb{R}^d \mapsto \mathbb{R}^{d \times m}$ are assumed to be locally Lipschitz continuous and full rank \cite{reis2020control}.
Given the control affine system in Eq.~\eqref{eq:ctlAffine} for every agent, we can define a set of stabilizing controls and safety states using the following lemmas.

\begin{lemma}\label{lem:clf}
\label{eq:clf}[Summarized from \cite{ames2019control}] A positive definite function $V(\mathbf{x}_i): \mathbb{R}^d \mapsto \mathbb{R}_{\geq 0}$ is a control Lyapunov function (CLF),
if there exists an extended class-$\mathcal{K}$ function $\gamma(\cdot)$ such that  $\inf_{\mathbf{u}_i \in \mathbb{R}^m}\{\dot{V}(\mathbf{x}_i, \mathbf{u}_i)\}\leq -\gamma(V(\mathbf{x}_i))$. The admissible control space for any Lipschitz continuous controller $\mathbf{u}_i \in \mathbb{R}^m$ enforcing system's stability to $V(\cdot)=0$ thus becomes:
{\small \begin{align}\label{eq:clf_lemma}
    K_\mathrm{clf}(\mathbf{x}_i) = \{ \mathbf{u}_i \in \mathbb{R}^m | L_f V(\mathbf{x}_i) + L_gV(\mathbf{x}_i)\mathbf{u}_i \leq -\gamma(V(\mathbf{x}_i))\}
\end{align}}
where $L_f V(\mathbf{x}_i) = \nabla V^{T}(\mathbf{x}_i) f(\mathbf{x}_i)$ and $L_g V(\mathbf{x}_i) = \nabla V ^{T}(\mathbf{x}_i) g(\mathbf{x}_i)$, respectively.
\end{lemma}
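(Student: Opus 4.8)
The plan is to establish the claim in three stages: rewrite the Lyapunov derivative in Lie-derivative form, characterize the admissible control set, and then pass from the resulting differential inequality to asymptotic stability via a comparison argument.

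First I would compute $\dot V$ along the control-affine dynamics of Eq.~\eqref{eq:ctlAffine}. By the chain rule, $\dot V(\mathbf{x}_i,\mathbf{u}_i)=\nabla V^T(\mathbf{x}_i)\dot{\mathbf{x}}_i=\nabla V^T(\mathbf{x}_i)\bigl(f(\mathbf{x}_i)+g(\mathbf{x}_i)\mathbf{u}_i\bigr)=L_fV(\mathbf{x}_i)+L_gV(\mathbf{x}_i)\mathbf{u}_i$, which is affine in $\mathbf{u}_i$. Consequently the decrease requirement $\dot V(\mathbf{x}_i,\mathbf{u}_i)\le-\gamma(V(\mathbf{x}_i))$ is equivalent to the linear inequality $L_fV(\mathbf{x}_i)+L_gV(\mathbf{x}_i)\mathbf{u}_i\le-\gamma(V(\mathbf{x}_i))$, whose solution set is exactly $K_\mathrm{clf}(\mathbf{x}_i)$ as written in Eq.~\eqref{eq:clf_lemma}; this already yields the stated form of the admissible space.

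Next I would verify that the CLF hypothesis makes this set nonempty and that a usable feedback can be extracted from it. If $L_gV(\mathbf{x}_i)\neq 0$, the affine expression is unbounded below in $\mathbf{u}_i$, so the inequality holds for suitable $\mathbf{u}_i$; if $L_gV(\mathbf{x}_i)=0$, the hypothesis $\inf_{\mathbf{u}_i}\dot V(\mathbf{x}_i,\mathbf{u}_i)\le-\gamma(V(\mathbf{x}_i))$ forces $L_fV(\mathbf{x}_i)\le-\gamma(V(\mathbf{x}_i))$, so $K_\mathrm{clf}(\mathbf{x}_i)\neq\emptyset$ in either case. To obtain an actual Lipschitz continuous feedback selecting from $K_\mathrm{clf}(\cdot)$ I would invoke either Sontag's universal formula or the CLF-QP selection result cited in \cite{ames2019control}. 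I expect this to be the main technical obstacle: pointwise nonemptiness does not by itself furnish a regular selection, and one must argue that the selection does not chatter near states where $L_gV$ degenerates.

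Finally, fixing any Lipschitz feedback $\mathbf{u}_i(\mathbf{x}_i)\in K_\mathrm{clf}(\mathbf{x}_i)$, the closed-loop solution satisfies $\dot V(\mathbf{x}_i(t))\le-\gamma\bigl(V(\mathbf{x}_i(t))\bigr)$ for all $t$. I would then apply the comparison lemma against the scalar majorant $\dot v=-\gamma(v)$ with $v(0)=V(\mathbf{x}_i(0))$: since $\gamma$ is an extended class-$\mathcal{K}$ function, the origin of the scalar system is asymptotically stable, hence $V(\mathbf{x}_i(t))\to 0$. Combining this with positive definiteness of $V$ (and radial unboundedness if a global statement is desired) yields asymptotic stability of the equilibrium set $\{\mathbf{x}_i : V(\mathbf{x}_i)=0\}$, completing the argument.
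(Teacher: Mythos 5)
Your argument is correct and is the standard one: the paper itself gives no proof of this lemma, stating it only as a summary of the cited reference \cite{ames2019control}, and your three-stage sketch (affine form of $\dot V$ via the chain rule, nonemptiness of $K_\mathrm{clf}$ split on whether $L_gV(\mathbf{x}_i)$ vanishes, Lipschitz selection via Sontag's formula or the CLF--QP, then the comparison lemma against $\dot v=-\gamma(v)$) is exactly how that reference establishes it. You also correctly identify the one genuinely delicate point, namely that pointwise feasibility does not automatically yield a Lipschitz selection near states where $L_gV$ degenerates, which is handled in the literature by the small control property.
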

This definition allows us to devise a set of controls to ensure the closed-loop control system's stability prescribed by $V(\cdot)$ for each agent in the multi-agent system.

Denote $\mathbf{x} = [\mathbf{x}_1, \cdots, \mathbf{x}_N]^{T} \in \mathcal{X} \subset \mathbb{R}^{dN}$ as the joint multi-agent states. For any pairwise inter-agent collision avoidance between agent
$i,j \in \mathcal{I}$, the safety condition of $\mathbf{x}$ can be defined as follows.
\begin{align}
h_{ij}(\mathbf{x}) &= \| \pi(\mathbf{x}_i) - \pi(\mathbf{x}_j) \|^{2} - (r_i + r_j) ^2, \forall j \neq i \label{eq:hij}\\
\mathcal{H}_{ij} &= \{\mathbf{x} \in \mathbb{R}^{dN} | h_{ij}(\mathbf{x}) \geq 0\}, \forall j \neq i
\end{align}
where $r_i, r_j \in \mathbb{R}$ are the safety radius of the agent $i$ and $j$, and
$\pi:\mathbb{R}^d \mapsto \mathbb{R}^{s}$
is the mapping from the agent's state to its position in the Cartesian coordinate system.
To guarantee the safety of the decentralized multi-agent system, Control Barrier Functions (CBFs) \cite{ames2014control, wang2017safety, luo2020multi} can be used to derive the set of controls that render the forward invariance of the safe set
for pair-wise agents, as discussed below.

\begin{lemma}\label{lem:de_cbf}[Summarized from \cite{ames2014control, lyu2023risk}]
Given a decentralized multi-agent system with each agent's dynamics defined by Eq.~\eqref{eq:ctlAffine} and a safe set $\mathcal{H}_{ij}$ as the 0-super level set of a continuously differentiable function $h_{ij}(\mathbf{x}): \mathcal{X}\mapsto \mathbb{R}$ for pair-wise agents $i$ and $j$, 
the function $h_{ij}(\mathbf{x})$ is called a control barrier function if there exists an extended class-$\mathcal{K}$ function $\alpha(\cdot)$ such that $\sup_{\mathbf{u}_i \in \mathbb{R}^m}\{\dot{h}_{ij}(\mathbf{x}, \mathbf{u})\}\geq -\alpha(h_{ij}(\mathbf{x}))$.
The admissible control space for any Lipschitz continuous
controller $\mathbf{u}_i \in \mathcal{U}_i$ enforcing the safe set forward invariant thus becomes:
{\small
\begin{align}\label{eq:de_cbc_lemma}
    K_\mathrm{cbf}(\mathbf{x}_i) = \{ \mathbf{u}_i\in \mathcal{U}_i | &L_f h_{ij}(\mathbf{x}) + L_gh_{ij}(\mathbf{x})\mathbf{u}_i  \geq \notag \\
&\qquad \qquad  - W_i \alpha(h_{ij}(\mathbf{x})), \forall j \neq i \}
\end{align}}
where $L_f h_{ij}(\mathbf{x}) = 
\nabla_{\mathbf{x}_i} h_{ij}^{T}(\mathbf{x}) 
f(\mathbf{x}_i)$ and
$L_g h_{ij}(\mathbf{x}) =  \nabla_{\mathbf{x}_i} h_{ij}^{T}(\mathbf{x})g(\mathbf{x}_i)$, respectively.
$W_i \in (0,1)$ is Pairwise Responsibility Weight with $W_i + W_j = 1$.
\end{lemma}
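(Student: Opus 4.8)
The plan is to reduce this decentralized statement to the classical single-system CBF forward-invariance result of \cite{ames2014control} by exploiting the pairwise structure of $h_{ij}$. First I would note that $h_{ij}(\mathbf{x})$ in Eq.~\eqref{eq:hij} is symmetric, $h_{ij}(\mathbf{x}) = h_{ji}(\mathbf{x})$, and depends on the joint state $\mathbf{x}$ only through the pair $(\mathbf{x}_i,\mathbf{x}_j)$. Differentiating along the closed-loop trajectory of Eq.~\eqref{eq:ctlAffine} and grouping the two nonzero blocks of $\nabla_{\mathbf{x}} h_{ij}$ gives
\begin{align}
\dot h_{ij}(\mathbf{x},\mathbf{u}) &= \nabla_{\mathbf{x}_i} h_{ij}^{T}(\mathbf{x})\,\dot{\mathbf{x}}_i + \nabla_{\mathbf{x}_j} h_{ij}^{T}(\mathbf{x})\,\dot{\mathbf{x}}_j \notag\\
&= \big(L_f h_{ij}(\mathbf{x}) + L_g h_{ij}(\mathbf{x})\mathbf{u}_i\big) + \big(L_f h_{ji}(\mathbf{x}) + L_g h_{ji}(\mathbf{x})\mathbf{u}_j\big),
\end{align}
using the control-affine dynamics of each agent.

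Next, using the responsibility weights, if agent $i$ applies $\mathbf{u}_i \in K_\mathrm{cbf}(\mathbf{x}_i)$ and agent $j$ applies $\mathbf{u}_j \in K_\mathrm{cbf}(\mathbf{x}_j)$, then the first parenthesized term is at least $-W_i\alpha(h_{ij}(\mathbf{x}))$ and the second is at least $-W_j\alpha(h_{ij}(\mathbf{x}))$ (invoking $h_{ij}=h_{ji}$), so summing and using $W_i+W_j=1$ yields $\dot h_{ij}(\mathbf{x},\mathbf{u}) \geq -\alpha(h_{ij}(\mathbf{x}))$ along the trajectory. Before that I would verify that $K_\mathrm{cbf}(\mathbf{x}_i)$ is nonempty on $\mathcal{H}_{ij}$: this follows from the assumed CBF inequality $\sup_{\mathbf{u}_i}\{\dot h_{ij}\}\geq -\alpha(h_{ij})$ together with $g$ being full rank, and one can select a locally Lipschitz feedback (e.g., the min-norm/QP controller) so that the closed-loop vector field is locally Lipschitz and solutions exist and are unique.

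Finally, with $\dot h_{ij}(\mathbf{x}(t),\mathbf{u}(t)) \geq -\alpha(h_{ij}(\mathbf{x}(t)))$ holding on the closed loop, I would apply the comparison lemma against the scalar initial value problem $\dot y = -\alpha(y)$, $y(0)=h_{ij}(\mathbf{x}(0))\geq 0$: since $\alpha$ is an extended class-$\mathcal{K}$ function, the origin is an equilibrium of this scalar ODE, so $y(t)\geq 0$ for all $t\geq 0$, hence $h_{ij}(\mathbf{x}(t)) \geq y(t) \geq 0$ and $\mathbf{x}(t)\in\mathcal{H}_{ij}$ for all $t\geq 0$ — i.e., $\mathcal{H}_{ij}$ is forward invariant, which is precisely the claimed property of $K_\mathrm{cbf}$.

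The step I expect to be the main obstacle is not the invariance argument, which is standard once the derivative has been decomposed, but rather establishing that the two agents' \emph{independently} chosen admissible controls are simultaneously consistent and produce a well-posed closed loop: one must argue that each $K_\mathrm{cbf}(\mathbf{x}_i)$ is nonempty on the shared safe set and admits a Lipschitz selection, so that the existence--uniqueness hypotheses underlying the comparison lemma are met. A secondary subtlety is the boundary case $h_{ij}(\mathbf{x})=0$, where one relies on the extended class-$\mathcal{K}$ property of $\alpha$ (so $\alpha(0)=0$) and a Nagumo-type tangency argument to keep the trajectory inside $\mathcal{H}_{ij}$.
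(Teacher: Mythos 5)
Your argument is correct and matches the standard one: the paper states this lemma without proof (it is imported from the cited references), and the proof in that literature is exactly your decomposition of $\dot h_{ij}$ into the two agents' blocks, the responsibility-weight summation $W_i+W_j=1$ recovering $\dot h_{ij}\geq-\alpha(h_{ij})$, and the comparison-lemma/Nagumo step for forward invariance of $\mathcal{H}_{ij}$. Your flagged caveats (nonemptiness of $K_\mathrm{cbf}$ when $L_gh_{ij}=0$ and existence of a Lipschitz selection) are the genuine fine print that the cited works also handle by assumption, so nothing is missing.
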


Based on the simplest form of the constraint-driven control \cite{ames2014control, notomista2022resilience} for a decentralized multi-agent system in Eq.~\eqref{eq:ctlAffine} as well as the given CLF and CBF in Lemma~\ref{lem:clf} and Lemma~\ref{lem:de_cbf}, we can define the control objective of stabilizing the system and avoiding inter-robot collisions for agent $i$ as the following Quadratic Programming (QP) problem:
\begin{align} \label{eq:de_frame}
& \qquad \qquad \min _{(\mathbf{u}_i,\delta_i)\in \mathbb {R}^{m+1}} \| {\mathbf{u}_i} \|^{2} + p \delta_i ^{2}\\
&\mathrm{s}.\mathrm{t}.\quad  L_{f} V(\mathbf{x}_i) + L_{g} V(\mathbf{x}_i) \mathbf{u}_i \le -\gamma (V(\mathbf{x}_i)) +  \delta_i \tag{CLF} \\ 
& L_{f} h_{ij}(\mathbf{x}) + L_{g} h_{ij}(\mathbf{x}) \mathbf{u}_i \ge - W_{i} \alpha (h_{ij}(\mathbf{x})), \forall j \neq i \tag{CBF}
\end{align}
where $p > 0$ and $\delta_i \in \mathbb{R}$ is the relaxation slack variable which is used to soften the CLF constraints and make the QP feasible \cite{reis2020control}. The cost function aims to minimize the agent's control effort expended by the system. The CLF constraint is designed to move the agent toward its task-related goal, while the CBF constraint is for regulating
the
controller to keep the system safe, e.g., free from collisions.

However, following a similar analysis as \cite{reis2020control} for the closed-loop control framework in Eq.~\eqref{eq:de_frame}, there exist closed-loop equilibria for the system, including the undesirable ones leading to deadlocks, e.g., agents may be asymptotically stable at states that prevent them from converging to the task-prescribed goals. We summarize the equilibria and the conditions in the following lemma\footnote{For ease of notation, the dependency of $f(\mathbf{x}_i), V(\mathbf{x}_i), L_fV(\mathbf{x}_i), L_gV(\mathbf{x}_i)$, and $g(\mathbf{x}_i)$ on $\mathbf{x}_i$ is omitted, and the dependency of
$h_{ij}(\mathbf{x}), L_fh_{ij}(\mathbf{x})$, and
$L_gh_{ij}(\mathbf{x})$ on $\mathbf{x}$ is omitted.}.

\begin{lemma} \label{lem:deadlock}
[Summarized from \cite{reis2020control}] 
For the control framework in Eq.~\eqref{eq:de_frame} for the decentralized multi-agent system, the closed-loop equilibrium points set is given by $
\mathcal{S} = \mathcal{S}_{\mathrm{int}} \bigcup \mathcal{S}_{\partial \mathcal{H}_{ij}}$, where
{
\footnotesize
\begin{align} 
\mathcal {S}_{\mathrm {int}}=&\{ \mathbf{x}_i \in {\boldsymbol{\Omega }_{\overline {\mathbf {cbf}}}^{\mathbf {clf}}} \cap \mathrm {int}(\mathcal {H}_{ij}) \,|\;\, 
f
= p \gamma (V) G \nabla V \} \\ 
\mathcal {S}_{\partial \mathcal {H}_{ij}}=&\{ \mathbf{x}_i \in {\mathbf {\Omega _{cbf}^{clf}}} \cap \partial \mathcal {H}_{ij} \,|\;\, 
f
= \lambda _{1} G \nabla V - \lambda _{2} G \nabla_{\mathbf{x}_i} h_{ij} \} \label{eq:boundary}
\end{align}}
\end{lemma}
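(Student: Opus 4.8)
\medskip
\noindent\emph{Proof sketch.} The plan is to adapt the KKT-based equilibrium analysis of \cite{reis2020control} to the decentralized QP \eqref{eq:de_frame}. Because the slack variable $\delta_i$ makes \eqref{eq:de_frame} feasible for every $\mathbf{x}_i$ and the objective is strictly convex, the QP has a unique minimizer and its KKT system is necessary and sufficient for optimality. First I would form the Lagrangian with a multiplier $\lambda_{\mathrm{clf}}\ge 0$ for the (CLF) constraint and $\lambda_{ij}\ge 0$ for each (CBF) constraint, $j\neq i$. Stationarity in $(\mathbf{u}_i,\delta_i)$ then gives the closed forms $\mathbf{u}_i^{\star} = -\tfrac12\lambda_{\mathrm{clf}}\, g^{T}\nabla V + \tfrac12\sum_{j\neq i}\lambda_{ij}\, g^{T}\nabla_{\mathbf{x}_i} h_{ij}$ and $\delta_i^{\star} = \lambda_{\mathrm{clf}}/(2p)$, complemented by primal feasibility and the complementary-slackness relations $\lambda_{\mathrm{clf}}\big(L_f V + L_g V\,\mathbf{u}_i^{\star} + \gamma(V) - \delta_i^{\star}\big) = 0$ and $\lambda_{ij}\big(L_f h_{ij} + L_g h_{ij}\,\mathbf{u}_i^{\star} + W_i\alpha(h_{ij})\big) = 0$.

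Next I would substitute $\mathbf{u}_i^{\star}$ into the closed loop \eqref{eq:ctlAffine}. Abbreviating $G := g\,g^{T}$, the closed loop becomes $\dot{\mathbf{x}}_i = f - \tfrac12\lambda_{\mathrm{clf}}\, G\nabla V + \tfrac12\sum_{j\neq i}\lambda_{ij}\, G\nabla_{\mathbf{x}_i} h_{ij}$. Hence $\mathbf{x}$ is a closed-loop equilibrium for agent $i$ if and only if $f = \tfrac12\lambda_{\mathrm{clf}}\, G\nabla V - \tfrac12\sum_{j\neq i}\lambda_{ij}\, G\nabla_{\mathbf{x}_i} h_{ij}$, and it remains to pin down the multipliers through complementary slackness.

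Then I would split into the two cases that define $\mathcal{S}_{\mathrm{int}}$ and $\mathcal{S}_{\partial\mathcal{H}_{ij}}$. (i) If $\mathbf{x}\in\mathrm{int}(\mathcal{H}_{ij})$ for all $j\neq i$, then every $h_{ij}>0$ and complementary slackness forces $\lambda_{ij}=0$; on the region $\boldsymbol{\Omega}_{\overline{\mathbf{cbf}}}^{\mathbf{clf}}$ the (CLF) constraint is active, so $L_f V + L_g V\,\mathbf{u}_i^{\star} = -\gamma(V)+\delta_i^{\star}$, while an equilibrium satisfies $\dot{\mathbf{x}}_i=0$, hence $\dot V = \nabla V^{T}\dot{\mathbf{x}}_i = L_f V + L_g V\,\mathbf{u}_i^{\star} = 0$; combining the two gives $\delta_i^{\star}=\gamma(V)$ and therefore $\lambda_{\mathrm{clf}}=2p\gamma(V)$, which substituted back yields $f = p\gamma(V)\,G\nabla V$, i.e.\ $\mathcal{S}_{\mathrm{int}}$. (ii) If $\mathbf{x}\in\partial\mathcal{H}_{ij}$ so that a CBF constraint is active, the corresponding $\lambda_{ij}$ may be strictly positive; setting $\lambda_1 := \tfrac12\lambda_{\mathrm{clf}}$ and $\lambda_2 := \tfrac12\lambda_{ij}$ (after consolidating simultaneously active neighbours into a single effective barrier term) gives $f = \lambda_1\, G\nabla V - \lambda_2\, G\nabla_{\mathbf{x}_i} h_{ij}$, i.e.\ $\mathcal{S}_{\partial\mathcal{H}_{ij}}$. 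The union of the two cases is precisely $\mathcal{S} = \mathcal{S}_{\mathrm{int}}\cup\mathcal{S}_{\partial\mathcal{H}_{ij}}$.

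I expect the main obstacle to be the boundary case: handling several (CBF) constraints that are active simultaneously and arguing that their net contribution can always be cast in the single-pair form of \eqref{eq:boundary}, together with correctly delineating the active-set regions $\boldsymbol{\Omega}_{\overline{\mathbf{cbf}}}^{\mathbf{clf}}$ and $\boldsymbol{\Omega}_{\mathbf{cbf}}^{\mathbf{clf}}$ inherited from \cite{reis2020control} and verifying that at a genuine deadlock the (CLF) constraint is active, i.e.\ $\lambda_{\mathrm{clf}}>0$, which is what legitimizes the identity $\lambda_{\mathrm{clf}}=2p\gamma(V)$ used in case (i). The remaining steps are the routine QP/KKT manipulations sketched above.
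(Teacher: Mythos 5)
The paper never proves this lemma: it is imported verbatim as a summary of \cite{reis2020control}, so there is no in-paper argument to compare against. Your KKT reconstruction is exactly the route that the cited reference takes, and the outline is sound: strict convexity plus the slack variable give a unique minimizer characterized by the KKT system, stationarity yields $\mathbf{u}_i^{\star}=-\tfrac12\lambda_{\mathrm{clf}}g^{T}\nabla V+\tfrac12\sum_{j}\lambda_{ij}g^{T}\nabla_{\mathbf{x}_i}h_{ij}$ and $\delta_i^{\star}=\lambda_{\mathrm{clf}}/(2p)$, and complementary slackness correctly splits the equilibria into the interior case (all $\lambda_{ij}=0$, active CLF forcing $\delta_i^{\star}=\gamma(V)$, hence $\lambda_{\mathrm{clf}}=2p\gamma(V)$ and $f=p\gamma(V)G\nabla V$) and the boundary case $f=\lambda_1 G\nabla V-\lambda_2 G\nabla_{\mathbf{x}_i}h_{ij}$. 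Two items separate your sketch from the full statement. First, the lemma's explicit formulas for $\lambda_1,\lambda_2$ (and the membership condition $\lambda_1,\lambda_2\ge 0$ defining $\mathbf{\Omega_{cbf}^{clf}}$) are obtained by substituting $\mathbf{u}_i^{\star}$ into the two active constraints and solving the resulting $2\times 2$ linear system with determinant $\Delta$; you should carry that computation out, and also treat the degenerate branch $\nabla_{\mathbf{x}_i}h_{ij}=0$, where the system loses rank and $\lambda_2$ is unconstrained, exactly as the lemma records. Second, the difficulty you flag about several simultaneously active neighbours is real and cannot in general be ``consolidated into a single effective barrier term'': the equilibrium condition then reads $f=\lambda_1 G\nabla V-\sum_{j}\lambda_{2,j}G\nabla_{\mathbf{x}_i}h_{ij}$, which is not of the single-pair form in Eq.~\eqref{eq:boundary}. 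The lemma (like its source) is a pairwise statement and silently restricts to one active constraint, so your proof should either adopt that restriction explicitly or state the boundary set with the sum; treating the consolidation as routine would be the one genuine gap in the argument.
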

$G=gg^{T}\in\mathbb{R}^{d\times d}$
and 
if $\nabla_{\mathbf{x}_i} h_{ij} \neq 0$, $\lambda _1,\lambda _2$ are given by: 
{\footnotesize
\begin{align*} \lambda _{1}=&\Delta ^{-1} \left ({F_{h_{ij}} \nabla V^{{T}} G \nabla_{\mathbf{x}_i} h_{ij} - F_{V} \| {L_{g} h_{ij}} \|^{2} }\right) \\ \lambda _{2}=&\Delta ^{-1} \left ({F_{h_{ij}} \left ({\| {L_{g} V} \|^{2} + p^{-1} }\right) - F_{V} \nabla V^{{T}} G \nabla_{\mathbf{x}_i} h_{ij} }\right) \end{align*}}
with $F_V = L_fV + \gamma(V)$, $F_{h_{ij}} = L_fh_{ij}+\alpha(h_{ij})$ and
{\footnotesize
\begin{equation*} \Delta = (\nabla V^{{T}} G \nabla h_{ij})^{2} - (p^{-1} + \| {L_{g} V} \|^{2}) \| {L_{g} h_{ij}} \|^{2}
\end{equation*}}
or if $\nabla_{\mathbf{x}_i} h_{ij} = 0$, then $\lambda_1 = F_V/(1/p+ \|LgV\|^{2})$, 
and Eq.~\eqref{eq:boundary} holds for any non-negative $\lambda_2$.

$\mathcal{S}_{\mathrm{int}}$ and $\mathcal{S}_{\partial \mathcal{H}_{ij}}$ are the sets of interior equilibria and boundary equilibria, respectively, with
{ \footnotesize
\begin{align*} {\boldsymbol{\Omega }_{\overline {\mathbf {cbf}}}^{\mathbf {clf}}}=&\left\{{ x \in \mathbb {R}^{d}~:~\frac {\nabla V^{{T}} G \nabla_{\mathbf{x}_i} h_{ij}}{p^{-1} + \| {L_{g} V} \|^{2}} F_{V} < F_{h_{ij}} , F_{V} \ge 0 }\right\} \quad \\ {\mathbf {\Omega _{cbf}^{clf}}}=&\{ x \in \mathbb {R}^{d}~:~\lambda _{1}, \lambda _{2} \ge 0 \} \end{align*}}

Specifically, the set ${\boldsymbol{\Omega }_{\overline {\mathbf {cbf}}}^{\mathbf {clf}}}$ denotes the states where the CLF constraint is active and the CBF constraint is inactive in Eq.~\eqref{eq:de_frame}, while ${\mathbf {\Omega _{cbf}^{clf}}}$ denotes the states where both CLF and CBF constraints are active.

While both interior and boundary equilibria exist, only the interior equilibria in $\mathcal{S}_{\mathrm{int}}$ align with the objective of the CLF as defined in~Eq.~\eqref{eq:de_frame}. In contrast, the boundary equilibria in $\mathcal{S}_{\partial \mathcal{H}_{ij}}$, which is located at the boundary of the safe set as defined in~Eq.~\eqref{eq:boundary},
can prevent the system from converging to the desired task-related equilibrium where $V(\cdot)=0$. This can cause agents to stall, leading to deadlock \cite{grover2023before}.

\subsection{Problem Statement}

In this paper, our goal is to design a deadlock-free controller for decentralized multi-agent systems that ensures asymptotic stability while guaranteeing collision avoidance. 
Although \cite{reis2020control} proposed one solution by introducing an auxiliary CBF to prevent the system from converging to the boundary equilibria
regardless of when that may occur, it could
dominate the agents’ behaviors from the starting stage and thus unnecessarily reduce the task execution efficiency.
To address deadlock avoidance while accounting for original task efficiency,
the problem becomes how to develop a generalized decentralized control framework that co-optimizes the control objectives such as deadlock resolution, collision avoidance, and task-related stabilization, so that the decentralized multi-agent system can navigate towards the task-related goals without collisions while adaptively avoiding deadlock conditions in Eq.~\eqref{eq:de_frame} when necessary.

\section{Method}
In this section, we present a novel control framework subject to the deadlock problem inherent in the closed-loop control framework in Eq.~\eqref{eq:de_frame} for a decentralized multi-agent system. Importantly, in an effort to improve task efficiency, we will explore a generalized framework that will adaptively decide when to unlock/deactivate the deadlock resolution according to the observed neighboring agents' information.

\subsection{Deadlock Indicator Function}
The CBF-inspired risk metric proposed in \cite{lyu2023risk} has been demonstrated to effectively reflect the dynamic proximity of agents in decentralized multi-agent systems,
which is defined as follows for each agent $i$.
\begin{equation} \label{eq:risk}
R_i = \frac{1}{N-1}\sum_{j \neq i} (-\dot{h}_{ij}(\mathbf{x}, \mathbf{u}_i)-\alpha (h_{ij}(\mathbf{x}))) + \phi
\end{equation}
where $\phi$ is a positive constant to ensure that $R_i$ is always positive. 
A higher value of the CBF-inspired risk measurement $R_i$ indicates that agent $i$ is operating in a denser environment and thus more likely to collide with other agents 
accounting for their proximity and executed control from the last time step. Note that in this paper, we assume that all the other agents' states and velocities are considered for risk evaluation.

Motivated by the CBF-inspired risk measurement in Eq.~\eqref{eq:risk}, we further propose a deadlock indicator function $\zeta_i(R_i)$ for agent $i$ as:
\begin{equation} \label{eq:sigmoid}
    \zeta_i(R_i) =  \frac{1}{1+e^{-t(R_i-c)}}
\end{equation}
where $c$ and $t$ are user-specified constant. When agent $i$ is far from other agents or obstacles, the CBF-inspired risk measurement $R_i$ is small, rendering $\zeta_i(R_i) \rightarrow 0$. In this case, the agent perceiving a low-risk environment is able to proceed toward its goal without significant concern for a deadlock, as it remains far from boundary equilibria where deadlock situations are less likely to occur. 
On the other hand, when the agent $i$ is close to other agents or obstacles, the value of CBF-inspired risk measurement $R_i$ becomes larger, leading to $\zeta_i(R_i) \rightarrow 1$, and the framework must start to account for potential deadlock situations by initiating deadlock resolution to ensure safe and efficient navigation.

\subsection{A Generalized Deadlock Resolution Framework}

Based on the discussion above, the deadlock indicator function can thus serve as a switch to adaptively determine when the deadlock resolution should be considered. In this section, we present how to integrate the deadlock indicator function into the deadlock resolution and formulate a generalized decentralized multi-agent control framework which seamlessly balances deadlock avoidance with task execution.

Similar to \cite{reis2020control}, we introduce an additional rotation
for the state $\mathbf{x}_i$ of agent $i$ and an auxiliary CBF to avoid deadlock by temporarily changing the task-related CLF.

Firstly, an extended CLF
$V_q: \mathbb{R}^d \times \mathcal{SO}(d) \mapsto \mathbb{R}$
for the task-related CLF\footnote{We assume that $V(\cdot)$ is a nonradial function so that the rotation matrix will introduce a difference to the agent's state, because if $V$ is a radial function, $V(\mathbf{Q}_i\mathbf{x}_i) = V(\mathbf{x}_i)$} $V(\cdot)$ is introduced as follows.
{\small\begin{equation} \label{eq:Vq}
    V_q(\mathbf{x}_i, \mathbf{Q}_i) = V(\mathbf{Q}_i\mathbf{x}_i)
\end{equation}}
where $\mathbf{Q}_i \in \mathcal{SO}(d)$ represents a rotation matrix from the special orthogonal group that rotates the agent’s state $\mathbf{x}_i$. The derivative of $V_q$ is given by:
{\small
\begin{equation}
    \dot{V}_q(\mathbf{x}_i, \mathbf{Q}_i) = (\mathbf{Q}_i^{T}\nabla V(\mathbf{Q}_i\mathbf{x}_i))^{T}\dot{\mathbf{x}}_i + (\mathcal{O}_d(\mathbf{x}_i)^{T}\mathbf{Q}_i^{T}\nabla V(\mathbf{Q}_i\mathbf{x}_i))^{T}\omega_i \notag
\end{equation}}
where $\omega_i \in \mathbb{R}^{\frac{1}{2} d(d-1)}$ is the control signal that governs the agent’s state rotation. Specifically, the rotational dynamics of the agent’s state are determined by the evolution of $\mathbf{Q}_i$, which satisfies the differential equation
$\dot{\mathbf{Q}}_i = \mathbf{Q}_i \hat{\omega}_i$,
where $\hat{\omega}_i \in \mathfrak{so}(d)$ is a skew-symmetric matrix from the Lie algebra of the rotation group, constructed from $\omega_i$. 
The operator $\mathcal{O}_d : \mathbb{R}^d \rightarrow \mathbb{R}^{d \times \frac{d}{2} (d-1)}$ is defined by the
relation $\hat{\omega}_i \mathbf{x}_i = \mathcal{O}_d(\mathbf{x}_i)\omega_i$, capturing the interaction between the control signal and the agent's state.

In this way, $V_q$ introduces a rotation to the agent state so that the original task goal is virtually deviated and thus helps to solve the deadlock problem.
With this, we further integrate our deadlock indicator function with the $V$ and $V_q$ to construct our CLF constraint that drives the agent toward its goal with adaptive deadlock resolution as follows.
{\small\begin{align}
& (1 - \zeta_i(R_i))(\dot{V}(\mathbf{x}_i, \mathbf{u}_i) + \gamma (V(\mathbf{x}_i))) + \notag \\
& \zeta_i(R_i) (\dot{V}_q(\mathbf{x}_i, \omega_i, \mathbf{u}_i) + \gamma (V_q(\mathbf{x}_i, \mathbf{Q}_i))) \le \delta_i \label{eq:CLF}
\end{align}}
\begin{remark}\label{remark:CLF}
By integrating our deadlock indicator function into the constraint to formulate a CLF constraint, we dynamically adjust the task function as needed. 
Specifically, when $\zeta_i(R_i) \rightarrow 0$), Eq.~\eqref{eq:CLF}
simplifies to
Eq.~\eqref{eq:de_frame}, 
which prioritizes task execution as the agent is confident that it is not at risk of encountering deadlocks.
When $\zeta_i(R_i) \in (0,1]$, Eq.~\eqref{eq:CLF} enforces the agent to deviate from its original task goal.
\end{remark}

Secondly, one necessary condition for the boundary equilibria condition in Eq.~\eqref{eq:boundary} is that the vectors $f, G \nabla V_q$ and $G\nabla_{\mathbf{x}_i} h_{ij}$ are simultaneously collinear, i.e., these vectors are parallel.
We follow \cite{reis2020control} to use the positive semi-definite function $\mathcal {D}_{ij}$ to evaluate the proximity of the system state from the collinearity condition.
{\footnotesize
\begin{equation} \label{eq:colinear}
\mathcal {D}_{ij}(\mathbf{x},\mathbf{Q}_i) = \frac {1}{2} \nabla V_q(\mathbf{x}_i,\mathbf{Q}_i)^{T} G \left ({\mathcal {P}_{f} + \mathcal {P}_{G \nabla_{\mathbf{x}_i} h_{ij}} } \right) G \nabla V_q(\mathbf{x}_i,\mathbf{Q}_i) \notag
\end{equation}}
where $\mathcal{P}$ is a scaled orthogonal projection operator. For a vector $v \in \mathbb{R}^n$, $\mathcal{P}_v = ||v||^2 \mathbf{I} - vv^{T} \in \mathbb{R}^{n \times n}$. In this way, the necessary condition for the boundary equilibria condition, i.e., the vectors $f, G \nabla V_q$ and $G\nabla_{\mathbf{x}_i} h_{ij}$ are simultaneously collinear and thus become $\mathcal{D}_{ij}=0$. To avoid such condition, we use the auxiliary CBF $h_{\mathcal{D}_{ij}}$ from \cite{reis2020control} to enforce $\mathcal{D}_{ij}>0$.
\begin{equation}\label{eq:h_dij}
    h_{\mathcal{D}_{ij}}(\mathbf{x}, \mathbf{Q}_i) = \psi(h_{ij}(\mathbf{x}))(\mathcal{D}_{ij}(\mathbf{x}, \mathbf{Q}_i) - \epsilon)
\end{equation}
where $\epsilon$ is a small positive constant and $\psi(\cdot)$ is a smooth and decreasing function with (1) $\psi(\cdot) > 0$, (2) $\psi^{'}(0) = 0 $ and (3) $ \lim_{h \rightarrow \infty}\psi(h) = 0$.
The derivative of $h_{\mathcal{D}_{ij}}$ is given by:
$\dot{h}_{\mathcal{D}_{ij}} = L_f h_{\mathcal{D}_{ij}} + L_g h_{\mathcal{D}_{ij}} \mathbf{u}_i + \nabla_{Qi} h_{\mathcal{D}_{ij}}^{T} \omega_i$ with $L_f h_{\mathcal{D}_{ij}} = \nabla h_{\mathcal{D}_{ij}}^{T} f(\mathbf{x}_i)$ and $L_g h_{\mathcal{D}_{ij}} = \nabla h_{\mathcal{D}_{ij}}^{T}g(\mathbf{x}_i)$ where 
{\footnotesize\begin{align}
&\nabla h_{\mathcal {D}_{ij}}(\mathbf{x},\mathbf{Q}_i)=\psi (h_{ij}) \nabla \mathcal {D}_{ij} + \sigma '(h_{ij}) (\mathcal {D}_{ij} - \epsilon) \nabla_{\mathbf{x}_i} h_{ij} \\
&\nabla _{Q} h_{\mathcal {D}_{ij}}(\mathbf{x},\mathbf{Q}_i)=\psi (h_{ij}) \nabla _{Qi} \mathcal {D}_{ij} 
\end{align}}
where $\nabla \mathcal {D}_{ij}$ and $\nabla _{Qi} \mathcal {D}_{ij}$ are given as follow.
{\footnotesize\begin{align} \label{eq:D}
\nabla \mathcal {D}_{ij}= &(H_{V} G + \Gamma ^{T}_{g,\nabla V}) (\mathcal {P}_{f} + \mathcal {P}_{G \nabla_{\mathbf{x}_i} h_{ij}}) G \nabla V +(H_{h_{ij}} G \notag \\
&+ \Gamma ^{{T}}_{g,\nabla h_{ij}}) \mathcal {P}_{G \nabla V} G \nabla_{\mathbf{x}_i} h_{ij} + \nabla f^{{T}} \mathcal {P}_{G \nabla V} f 
\end{align}}
{\footnotesize
\begin{equation}
\nabla _{Qi} \mathcal {D}_{ij} = (H_{V} \mathcal {O}_{d}(\mathbf{x}_i) - \mathcal {O}_{d}(\nabla V))^{{T}} G (\mathcal {P}_{f} + \mathcal {P}_{G \nabla_{\mathbf{x}_i} h_{ij}}) G \nabla V 
\end{equation}}
$H_{V}$ and $H_{h_{ij}}$ represent the Hessian matrices of the CLF and the CBF.
$\Gamma_{a, b}(x) \in \mathbb{R}^{d \times d}$ is defined as $\Gamma _{a, b}(x) = \Sigma_{i=1}^{m}(a_{i}(x)^{T}b\mathbf{I} + a_{i}(x)b^{T})\nabla a_{i}$ for the matrix function $a(x) = [a_{1}(x), \cdots, a_{m}(x)] \in \mathbb{R}^{d \times m}$ with a vector $b \in \mathbb{R}^d$.

We further integrate our deadlock indicator function into CBF in Eq.~\eqref{eq:h_dij} to construct the CBF constraint as:
\begin{equation}\label{eq:CBF2}
\zeta_i(R_i)(\dot{h}_{\mathcal{D}_{ij}}(\mathbf{x}, \omega_i, \mathbf{u}_i) + \beta (h_{\mathcal {D}_{ij}}(\mathbf{x}, \mathbf{Q}_i))) \ge 0
\end{equation}
where $\beta(\cdot)$ is a class-$\mathcal{K}$ function. By integrating the deadlock indicator function into the CBF constraint in Eq.~\eqref{eq:CBF2}, it is straightforward that when the agents are far away from each other (i.e., $\zeta_i(R_i) \rightarrow 0$), the constraint would directly hold. In this case, compared to the constraints in \cite{reis2020control} (i.e., $\dot{h}_{\mathcal{D}_{ij}}(\mathbf{x},\omega_i, \mathbf{u}_i) + \beta (h_{\mathcal{D}_{ij}}(\mathbf{x}), \mathbf{Q}_i) \ge 0$), the proposed constraints in Eq.~\eqref{eq:CBF2} remain inactive in a dynamic environment unless the agents are in close proximity. This allows agents to autonomously detect deadlock and respond accordingly.

\begin{theorem}
Consider the multi-agent system with the agent's dynamics given by Eq.~\eqref{eq:ctlAffine}. Assume we have the task related CLF $V(\mathbf{x}_i)$, the extended CLF $V_q(\mathbf{x}_i, \mathbf{Q}_i)$ given by Eq.~\eqref{eq:Vq}, the CBF $h_{ij}(\mathbf{x})$ given by Eq.~\eqref{eq:hij}, and CBF $h_{\mathcal{D}_{ij}}(\mathbf{x}, \mathbf{Q}_i)$ given by Eq.~\eqref{eq:CBF2}. Then for any initial condition that $\mathcal{D}_{ij} \neq 0$,
the solution of the following QP renders deadlock-free trajectories.
\begin{subequations}\begin{align} \label{eq:prop}
&\min _{ \!\!\substack {(\mathbf{u}_i,\omega_i,\delta_i) \in \mathbb {R}^{m} \times \mathbb {R}^{\frac {d}{2}(d-1)} \times \mathbb {R} }\!\!}~ \| {\mathbf{u}_i} \|^{2} + q \| {\omega_i } \|^{2} + p \delta_i ^{2} \tag{20} \\
& \mathrm{s}.\mathrm{t}.~~ (1 - \zeta_i(R_i))(\dot{V}(\mathbf{x}_i, \mathbf{u}_i) + \gamma (V(\mathbf{x}_i))) \notag \\
& \qquad + \zeta_i(R_i) (\dot{V}_q(\mathbf{x}_i, \omega_i, \mathbf{u}_i) + \gamma (V_q(\mathbf{x}_i, \mathbf{Q}_i))) \le \delta_i \label{CLF}
\\
&\qquad \dot{h}_{ij}(\mathbf{x}, \mathbf{u}_i) + W_i \alpha (h_{ij}(\mathbf{x})) \ge 0, \forall j \neq i \label{CBF1}
\\
&\qquad \zeta_i(R_i)(\dot{h}_{\mathcal{D}_{ij}}(\mathbf{x}, \omega_i, \mathbf{u}_i) + \beta (h_{\mathcal {D}_{ij}}(\mathbf{x}))) \ge 0, \forall j \neq i  \label{CBF2}
\end{align} \end{subequations}
where $p, q > 0 \in \mathbb{R}$ are user-defined constants. 

\end{theorem}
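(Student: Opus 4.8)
The plan is to establish three things about the QP~\eqref{eq:prop}: that it is pointwise feasible (so the closed loop is well defined with a locally Lipschitz minimizer), that it keeps every pairwise safe set $\{h_{ij}\ge 0\}$ forward invariant, and --- the substance of the claim --- that the closed-loop trajectory can never converge to one of the undesirable boundary equilibria of Lemma~\ref{lem:deadlock}, which is the operational meaning of being deadlock-free. Collision avoidance is the routine part: constraint~\eqref{CBF1} is the standard CBF inequality for $h_{ij}$, and combined with the matching constraint of agent $j$ and $W_i+W_j=1$ it yields $\dot{h}_{ij}\ge-\alpha(h_{ij})$, so $\{h_{ij}\ge 0\}$ is forward invariant by Nagumo's theorem.

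For feasibility I would argue one constraint at a time. Constraint~\eqref{CLF} is a single affine inequality in $(\mathbf{u}_i,\omega_i,\delta_i)$ and is satisfied, for any $(\mathbf{u}_i,\omega_i)$, once $\delta_i$ is taken large enough, exactly as the slacked CLF in~\eqref{eq:de_frame}. Constraint~\eqref{CBF1} involves only $\mathbf{u}_i$ and cuts out a nonempty half-space because $h_{ij}$ is a valid CBF (Lemma~\ref{lem:de_cbf}). Constraint~\eqref{CBF2} is affine in $(\mathbf{u}_i,\omega_i)$, with the rotation input $\omega_i$ entering $\dot{h}_{\mathcal{D}_{ij}}$ through $\nabla_{Q_i}h_{\mathcal{D}_{ij}}^{T}\omega_i=\psi(h_{ij})\,\nabla_{Q_i}\mathcal{D}_{ij}^{T}\omega_i$; since $\psi>0$ and, by the construction borrowed from~\cite{reis2020control}, $\nabla_{Q_i}\mathcal{D}_{ij}\neq 0$ whenever $h_{\mathcal{D}_{ij}}=0$, the input $\omega_i$ keeps independent authority to fulfil~\eqref{CBF2} for any admissible $\mathbf{u}_i$. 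Picking $\mathbf{u}_i$ to meet~\eqref{CBF1}, then $\omega_i$ to meet~\eqref{CBF2}, then $\delta_i$ to meet~\eqref{CLF} produces a feasible triple, so a minimizer exists.

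The deadlock argument then rests on four observations. First, the logistic indicator $\zeta_i(R_i)=\big(1+e^{-t(R_i-c)}\big)^{-1}$ is \emph{strictly positive} for every finite $R_i$; hence constraint~\eqref{CBF2} is, after dividing by $\zeta_i(R_i)>0$, exactly the CBF inequality $\dot{h}_{\mathcal{D}_{ij}}+\beta(h_{\mathcal{D}_{ij}})\ge 0$, so the comparison lemma makes $\{h_{\mathcal{D}_{ij}}\ge 0\}$ forward invariant along the closed loop. Second, since $h_{\mathcal{D}_{ij}}=\psi(h_{ij})(\mathcal{D}_{ij}-\epsilon)$ with $\psi>0$, this set equals $\{\mathcal{D}_{ij}\ge\epsilon\}$; initializing $\mathbf{Q}_i$ (and, if needed, shrinking the design constant $\epsilon$) so that $\mathcal{D}_{ij}\ge\epsilon$ holds at $t=0$ --- which is possible precisely under the hypothesis $\mathcal{D}_{ij}\neq 0$, as $\mathcal{D}_{ij}$ is a sum of nonnegative quadratic forms --- gives $\mathcal{D}_{ij}(t)\ge\epsilon>0$ for all $t\ge 0$. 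Third, writing $\mathcal{D}_{ij}=\tfrac12(G\nabla V_q)^{T}(\mathcal{P}_f+\mathcal{P}_{G\nabla_{\mathbf{x}_i}h_{ij}})(G\nabla V_q)$ with $\mathcal{P}_v=\|v\|^2\mathbf{I}-vv^{T}$ and applying Cauchy--Schwarz to each quadratic form shows that $\mathcal{D}_{ij}>0$ is equivalent to the vectors $f$, $G\nabla V_q$, $G\nabla_{\mathbf{x}_i}h_{ij}$ not being simultaneously collinear. Fourth, collinearity of these three vectors is a necessary condition for the boundary equilibria of~\eqref{eq:prop}, as discussed just before the theorem (following~\cite{reis2020control}); since it is violated for all $t$, no boundary equilibrium is ever reached, and the trajectories are deadlock-free.

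I expect the main obstacle to be the fourth step --- carrying the boundary-equilibrium characterization of~\cite{reis2020control}, which was developed for a single CLF, over to the $\zeta_i$-weighted combination of $V$ and $V_q$ that appears in~\eqref{CLF}, and in particular justifying that it is $V_q$ rather than the original $V$ that enters the collinearity condition. The intuition is that near $\partial\mathcal{H}_{ij}$ one has $h_{ij}\to 0$, hence a large risk $R_i$ and $\zeta_i(R_i)\to 1$, so the effective CLF there is $V_q$; but the transition regime $\zeta_i\in(0,1)$ must be handled carefully, for instance by showing directly that the blended constraint~\eqref{CLF} with $\zeta_i>0$ already precludes a stalled state whenever $G\nabla V_q$ is not parallel to $G\nabla_{\mathbf{x}_i}h_{ij}$, which $\mathcal{D}_{ij}\ge\epsilon$ guarantees. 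A minor loose end, closed by treating $\epsilon$ as tunable and using the freedom in initializing $\mathbf{Q}_i$, is the gap between the hypothesis $\mathcal{D}_{ij}\neq 0$ and the strict bound $\mathcal{D}_{ij}\ge\epsilon$ that launches the invariance argument.
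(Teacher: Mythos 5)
Your overall route coincides with the paper's: enforce forward invariance of $\{h_{\mathcal{D}_{ij}}\ge 0\}=\{\mathcal{D}_{ij}\ge\epsilon\}$ through constraint~\eqref{CBF2}, conclude that $f$, $G\nabla V_q$, $G\nabla_{\mathbf{x}_i}h_{ij}$ are never simultaneously collinear, and infer that no undesirable boundary equilibrium can be reached. Your observation that $\zeta_i(R_i)>0$ for every finite $R_i$, so that~\eqref{CBF2} can simply be divided by $\zeta_i$, is a mild streamlining of the paper's three-case split on $\zeta_i\to 0$, $\zeta_i\to 1$, $\zeta_i\in(0,1)$, and your feasibility and collision-avoidance discussion is material the paper does not spell out. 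The problem is that the step you yourself flag as ``the main obstacle'' is not a loose end to be tidied later --- it is the entire content of the theorem, and your proposal does not close it.

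Concretely: the collinearity characterization you invoke in your fourth step is the boundary-equilibrium condition of Lemma~\ref{lem:deadlock}, which was derived from the KKT conditions of the \emph{original} QP~\eqref{eq:de_frame}. The QP~\eqref{eq:prop} has a different constraint set --- the $\zeta_i$-blended CLF in~\eqref{CLF} and the additional constraint~\eqref{CBF2} with its own multiplier --- so its closed-loop equilibria need not satisfy that condition, and you cannot cite it. The paper's proof does the required work: it writes the Lagrangian of~\eqref{eq:prop}, obtains the stationarity condition $2\mathbf{u}_i+\lambda_1\bigl((1-\zeta_i)L_gV^{T}+\zeta_i L_gV_q^{T}\bigr)-\lambda_2 L_gh_{ij}^{T}-\lambda_3\zeta_i L_gh_{\mathcal{D}_{ij}}^{T}=0$, shows $\lambda_1>0$ at any putative equilibrium (from $\delta_i=\lambda_1/(2p)$ and positive definiteness of $V$ and $V_q$), substitutes $\mathbf{u}_i^{\star}$ into $f+g\mathbf{u}_i^{\star}=0$ to obtain a four-vector identity involving $f$, $G\nabla V$, $G\nabla_{\mathbf{x}_i}h_{ij}$ and $G\nabla\mathcal{D}_{ij}$, and then enumerates the parallelism patterns under which that identity could hold, showing each one forces either simultaneous collinearity or $\nabla\mathcal{D}_{ij}=0$, both contradicting $\mathcal{D}_{ij}=\epsilon$. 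Without some version of this KKT analysis, your argument establishes only that the trajectory stays off the collinearity set; it does not establish that staying off that set is incompatible with $\dot{\mathbf{x}}_i=0$ under the \emph{new} controller, which is the claim to be proved. (Your secondary concern --- reconciling the hypothesis $\mathcal{D}_{ij}\neq 0$ with the strict bound $\mathcal{D}_{ij}\ge\epsilon$ needed to start the invariance argument --- is legitimate, and the paper glosses over it too; your fix of tuning $\epsilon$ and the initial $\mathbf{Q}_i$ is reasonable.)
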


\begin{proof}
\textbf{Case 1}: when $\zeta_i(R_i) \to 0$, the framework will be a decentralized CLF-CBF in Eq.~\eqref{eq:de_frame}. Because the agents are far away from each other, the CBF constraint in Eq.~\eqref{eq:de_frame} is inactive. 
Consequently, no boundary equilibria exist, and therefore, deadlock is avoided.

\textbf{Case 2}: when $\zeta_i(R_i) \to 1$, the CLF constraint in Eq.~\eqref{CLF} would boil down to:
$\dot{V}_q(\mathbf{x}_i, \omega_i, \mathbf{u}_i) + \gamma (V_q(\mathbf{x}_i, \mathbf{Q}_i)) \le \delta_i$.
And the CBF constraint in Eq.~\eqref{CBF2} would boil down to:
$\dot{h}_{\mathcal{D}_{ij}}(\mathbf{x}, \omega_i, \mathbf{u}_i) + \beta (h_{\mathcal {D}_{ij}}(\mathbf{x}, \mathbf{Q}_i)) \ge 0$.
Thus the framework would boil down to the CLF-CBF with deadlock resolution method Theorem. 3 in \cite{reis2020control}. One can follow \cite{reis2020control} for detailed proof of the deadlock-free guarantee.

\textbf{Case 3}: when $\zeta_i(R_i) \in (0,1)$, the Lagrangian associated with the QP in Eq.~\eqref{eq:prop} for agent $i$ is,
{\footnotesize\begin{align*} 
\mathcal {L} &=  \| {\mathbf{u}_i} \|^{2} +  p \delta_i ^{2} + q  \|\omega_i  \|^ 2 - \lambda _{2} (F_{h_{ij}} + L_{g} h_{ij} \, \mathbf{u}_i)\\
&+ \lambda _{1}((1- \zeta_i) (F_{V} + L_{g} V \mathbf{u}_i) + \zeta_i (F_{Vq} + L_{g} V_q \mathbf{u}_i)- \delta_i) \\
& - \lambda_3 \zeta_i (L_f h_{\mathcal{D}_{ij}} + L_g h_{\mathcal{D}_{ij}} \mathbf{u}_i + \nabla_{Qi} h_{\mathcal{D}_{ij}}^{T} \omega_i + \beta{h_{\mathcal{D}_{ij}}})
\end{align*}}
where $\lambda_1, \lambda_2, \lambda_3$ are the Karush-Kuhn-Tucker (KKT) multipliers, and $\zeta_i = \zeta_i(R_i)$. 
The KKT conditions are:
{\footnotesize\begin{align}
&\frac {\partial \mathcal {L}}{\partial \mathbf{u}_i} = 2\mathbf{u}_i + \lambda _{1} (1-\zeta_i) L_{g} V^{{T}} + \lambda_1 \zeta_i L_g V_g^{T} \notag \\
&\quad \quad   - \lambda _{2} L_{g} h_{ij}^{{T}} - \lambda_3 \zeta_i L_gh_{\mathcal{D}_{ij}}= 0 \label{eq:kkt1} \\
&\frac {\partial \mathcal {L}}{\partial \delta_i } = 2 p \delta_i - \lambda _{1} = 0 \label{eq:delta_i} \\
&\frac {\partial \mathcal {L}}{\partial \omega_i } = 2 q \omega_i^{T} + \lambda _{1}\zeta_i(\mathcal{O}_d(\mathbf{x}_i)^{T}\mathbf{Q}_i^{T}\nabla V(\mathbf{Q}_i\mathbf{x}_i))^{T} = 0 \notag \\
& \lambda _{1}((1- \zeta_i) (F_{V} + L_{g} V \mathbf{u}_i) + \zeta_i (F_{Vq} + L_{g} V_q \mathbf{u}_i)- \delta_i) = 0 \notag \\
&\lambda _{2} (F_{h_{ij}} + L_{g} h_{ij} \,\mathbf{u}_i) = 0 \notag  \\
&\lambda_3 \zeta_i (L_f h_{\mathcal{D}_{ij}} + L_g h_{\mathcal{D}_{ij}} \mathbf{u}_i + \nabla_Q h_{\mathcal{D}_{ij}}^{T} \omega_i + \beta({h_{\mathcal{D}_{ij}}})=0 \notag
\end{align}}

Because the CBF constraint in Eq.~\eqref{CBF2} is used to avoid the undesired boundary equilibria conditions, we need to consider the possible solutions of the QP depending on the activation of the constraint Eq.~\eqref{CBF2}.

\textit{Subcase 1}
(if the possible solution would make the system get into a deadlock when the constraint Eq.~\eqref{CBF2} is inactive): Assume by contradiction that there exists a boundary equilibrium point $(\mathbf{x}_i^{\star}, \mathbf{Q}_i^{\star})$ for the agent $i$. Because Eq.~\eqref{CBF2} is inactive,
the boundary equilibrium condition in 
Theorem~\ref{lem:deadlock} \cite{reis2020control} holds, i.e., 
$f$, $G\nabla V$, and $G \nabla_{\mathbf{x}_i} h_{ij}$ are collinear at the boundary equilibrium point ($\mathcal{D}(\mathbf{x}_i^{\star}, \mathbf{Q}_i^{\star}) = 0$). It contradicts the setting that $h_{\mathcal{D}_{ij}} > 0$ in Eq.~(\ref{eq:h_dij}) is forward invariant.

\textit{Subcase 2} (if the possible solution would make the system get into a deadlock when the constraint Eq.~\eqref{CBF2} is active, i.e., $\dot{h}_{\mathcal{D}_{ij}} + \beta(h_{\mathcal{D}_{ij}}) = 0$):
Assume by contradiction that there exists a boundary equilibrium point $(\mathbf{x}_i^{\star}, \mathbf{Q}_i^{\star})$ for the agent $i$ which would make the agent fall into the deadlock. Because $(\dot{\mathbf{x}}_i, \dot{\mathbf{Q}}_i) = (0,0)$ at the equilibrium point, we can get that $\dot{V} = 0$, $\dot{V}_q = 0$, $\dot{h}_{ij} = 0$ and $\dot{h}_{\mathcal{D}_{ij}} = 0$. From the KKT condition in Eq.~\eqref{eq:delta_i}, we have $\delta_i =0.5 p^{-1} \lambda_1$. According to the CLF constraint Eq.~\eqref{CLF}, we can get $(1-\zeta_i)(\gamma(V(\mathbf{x}_i^{\star}))) + \zeta_i \gamma(V_q(\mathbf{x}_i^{\star})) \leq \delta_i$. Furthermore, we can get $\lambda_1 \geq p((1-\zeta_i)(\gamma(V(\mathbf{x}_i^{\star}))) + \zeta_i \gamma(V_q(\mathbf{x}_i^{\star})))$.
Because the functions $V$ and $V_q$ are both positive definite, we have $\lambda_1 > 0$, which means that CLF constraint Eq.~\eqref{CLF} is active. Additionally, since $\mathbf{x}_i^{\star} \in \partial \mathcal{H}$, 
$h_{ij} = 0$. Because $\dot{h}_{ij} + h_{ij} = 0$, we can ensure that the constraint Eq.~\eqref{CBF1} is also active. 
Because Eq.~\eqref{CBF2} is active and $\dot{h}_{\mathcal{D}_{ij}} = 0$, we can say that $h_{\mathcal{D}_{ij}} = \psi(0)(\mathcal{D}_{ij} - \epsilon) = 0$. Therefore, $\mathcal{D}_{ij} = \epsilon$, which means that $f(\mathbf{x}_i^{\star})$, $G(\mathbf{x}_i^{\star})\nabla V(\mathbf{x}_i^{\star}, \mathbf{Q}_i^{\star})$ and $G(\mathbf{x}_i^{\star})\nabla_{\mathbf{x}_i} h_{ij}(\mathbf{x}_i^{\star})$ are not simultaneously collinear. Using similar argument \cite{reis2020control} and according to Eq.~\eqref{eq:kkt1}, the solution for $\mathbf{u}_i$ is:

{\footnotesize
\begin{equation} \label{eq:condi}
\mathbf{u}_i^{\star} = \frac{ \lambda _{2} L_{g} h_{ij}^{{T}} + \lambda_3 \zeta_i L_gh_{\mathcal{D}_{ij}}^{T} - \lambda _{1} (1-\zeta_i) L_{g} V^{{T}} - \lambda_1 \zeta_i L_g V_g^{T}}{2} 
\end{equation}}
When all the constraints are active, the KKT conditions yield the new equilibrium condition:
{\footnotesize\begin{align}
&f + g \mathbf{u}_i^{\star} = 0 \notag \\
&f + g \frac{ \lambda _{2} L_{g} h_{ij}^{{T}} + \lambda_3 \zeta_i L_gh_{\mathcal{D}_{ij}}^{T} - \lambda _{1} (1-\zeta_i) L_{g}V^{{T}} - \lambda_1 \zeta_i V_g^{T}}{2} = 0 \notag \\
&f =\frac{\lambda_1 ((1-\zeta_i) G \nabla V + \zeta_i G \nabla V_g) - \lambda_2 G \nabla_{\mathbf{x}_i} h_{ij} - \lambda_3 \zeta_i \psi(0) G \nabla \mathcal{D}_{ij}}{2} \notag
\end{align}}
Because at the equilibrium point $\omega_i = 0$, we have $\nabla V = \nabla V_g$. Hence, the condition in Eq.~\eqref{eq:condi} will be: 
{\footnotesize
\begin{equation} \label{eq:condition}
f -0.5 \lambda_1 G \nabla V + 0.5\lambda_2 G \nabla_{\mathbf{x}_i} h_{ij} + 0.5 \lambda_3 \zeta_i \psi(0) G \nabla \mathcal{D}_{ij} = 0
\end{equation}}
Since $\lambda_1 > 0$ and $g$ is full rank, the second term needs to be canceled out by other terms. Since $f, G \nabla V$ and $G \nabla h_{ij}$ are not simultaneously collinear, we can summarize possible necessary conditions for Eq.~\eqref{eq:condition} holds:(i) $f || G \nabla_{\mathbf{x}_i} h_{ij}$, $\nabla V || \nabla \mathcal{D}_{ij}$,
(ii) $f || G \nabla V$, $\nabla_{\mathbf{x}_i} h_{ij} || \nabla \mathcal{D}_{ij}$, and
(iii) $\nabla_{\mathbf{x}_i} h_{ij} || \nabla V$, $f || G \nabla \mathcal{D}_{ij}$. (The expression $v || w$
denotes that two vectors $v, w \in \mathbb{R}^{n}$ are parallel.) These conditions can be rewritten as: (i) $f = \textit{k}_1 G \nabla_{\mathbf{x}_i} h_{ij}$, $\mathcal{P}_{\nabla V} \nabla \mathcal{D}_{ij} = 0$,
(ii) $f = \textit{k}_2 G \nabla V$, $\mathcal{P}_{\nabla_{\mathbf{x}_i} h_{ij}} \nabla \mathcal{D}_{ij} = 0$, and
(iii) $\nabla_{\mathbf{x}_i} h_{ij}=  \textit{k}_3 \nabla V$, $\mathcal{P}_{f} G \nabla \mathcal{D}_{ij} = 0$, respectively($\textit{k}_1, \textit{k}_2, \textit{k}_3 \in \mathbb{R}$ are constants). If $\mathcal{P}_{(\cdot)}$ is $0$, then $(\cdot)$ will be collinear with the other two vectors, which contradicts with $\mathcal{D}_{ij} = \epsilon$ i.e., $f, G \nabla V$ and $G \nabla_{\mathbf{x}_i} h_{ij}$ should be not simultaneously collinear. 
If $\mathcal{P}_{(\cdot)} $ is not $0$, then for $\mathcal{P}_{\nabla V} \nabla \mathcal{D}_{ij} = 0$, $\mathcal{P}_{\nabla_{\mathbf{x}_i} h_{ij}} \nabla \mathcal{D}_{ij} = 0$ and $\mathcal{P}_{f} G \nabla \mathcal{D}_{ij} = 0$ holding, we should guarantee $\nabla \mathcal{D}_{ij} = 0$. Thus $f, G \nabla V$ and $G \nabla_{\mathbf{x}_i} h_{ij}$ should be collinear according to Eq.~\eqref{eq:D}, which also contradicts with $\mathcal{D}_{ij} = \epsilon$. Therefore we can conclude that there is no deadlock existing for the system. 
\end{proof}

\begin{remark}
Note that when setting $\zeta_i(R_i) = 1$, our method will degrade to the CLF-CBF with deadlock resolution method proposed in \cite{reis2020control}. Therefore, our method is a more generalized framework that not only addresses the potential deadlock issues but also adaptively
activates or deactivates
the deadlock resolution as necessary.
\end{remark}

\begin{remark}
Although theoretically $\zeta_i(R_i)$ is in the range of $(0,1)$, it could be $0$ and $1$ in practice because of
the limitation of numerical precision, which makes our discussions of \textbf{Case 1} and \textbf{Case 2} reasonable and also correspond to the plots in Fig~\ref{fig:zeta}.
\end{remark}

\begin{remark}
When the constraint in Eq.~\eqref{CBF2} is switched on, it is possible that $\mathcal{D}_{ij}(\mathbf{x}, \mathbf{Q}_i) = 0$, which will make the QP Eq.~\eqref{eq:prop} unfeasible. To this end, a practical way is to set the optimal solution as $\mathbf{u}_i^{\star} = 0$ and $\omega_i^{\star} = \omega_c$, where $\omega_c$ is a small constant \cite{reis2020control}.
\end{remark}

\section{Results}
\subsection{Numerical Results with Point Agents}
In this section, we present the numerical simulation examples to verify the performance of our method using four agents with single integrator dynamics $\dot{\mathbf{x}}_i = \mathbf{u}_i$ for agent $i$. 
We take the CLF as $V(\mathbf{x}_i) = || \mathbf{x}_i - \mathbf{x}^\mathrm{goal}_{i}||^2$, which drives the agent towards its goal,
and CBF as $h_{ij}(\mathbf{x}) = ||\mathbf{x}_i - \mathbf{x}_j||^2 - (r_i + r_j)^2$, which defines the pairwise agents' safety. We compare \textbf{our method} with the \textbf{CLF-CBF} method Eq.~\eqref{eq:de_frame} and the \textbf{CLF-CBF with deadlock resolution} contextualized from \cite{reis2020control} with the same parameters setting. The initial positions for the four agents are $(-10,0), (0,10), (10,0), (0, -10)$.

\begin{figure}[!htbp]\label{fig:1}
\centering
\begin{subfigure}{0.46\linewidth}
  \includegraphics[width = \linewidth]{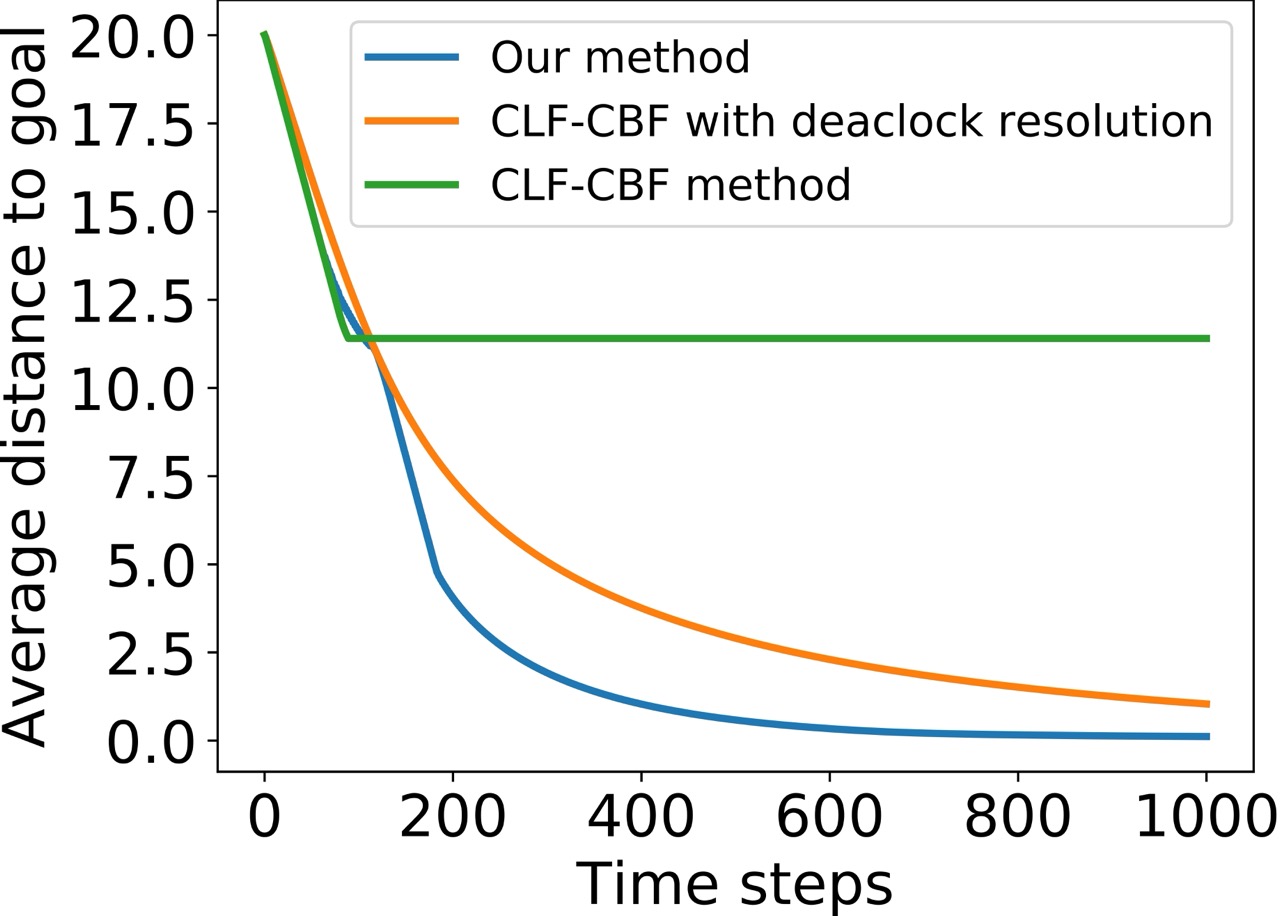}
 \caption{}
  \label{fig:comparison}
   \end{subfigure}
\begin{subfigure}{0.46\linewidth}
  \includegraphics[width = \linewidth]{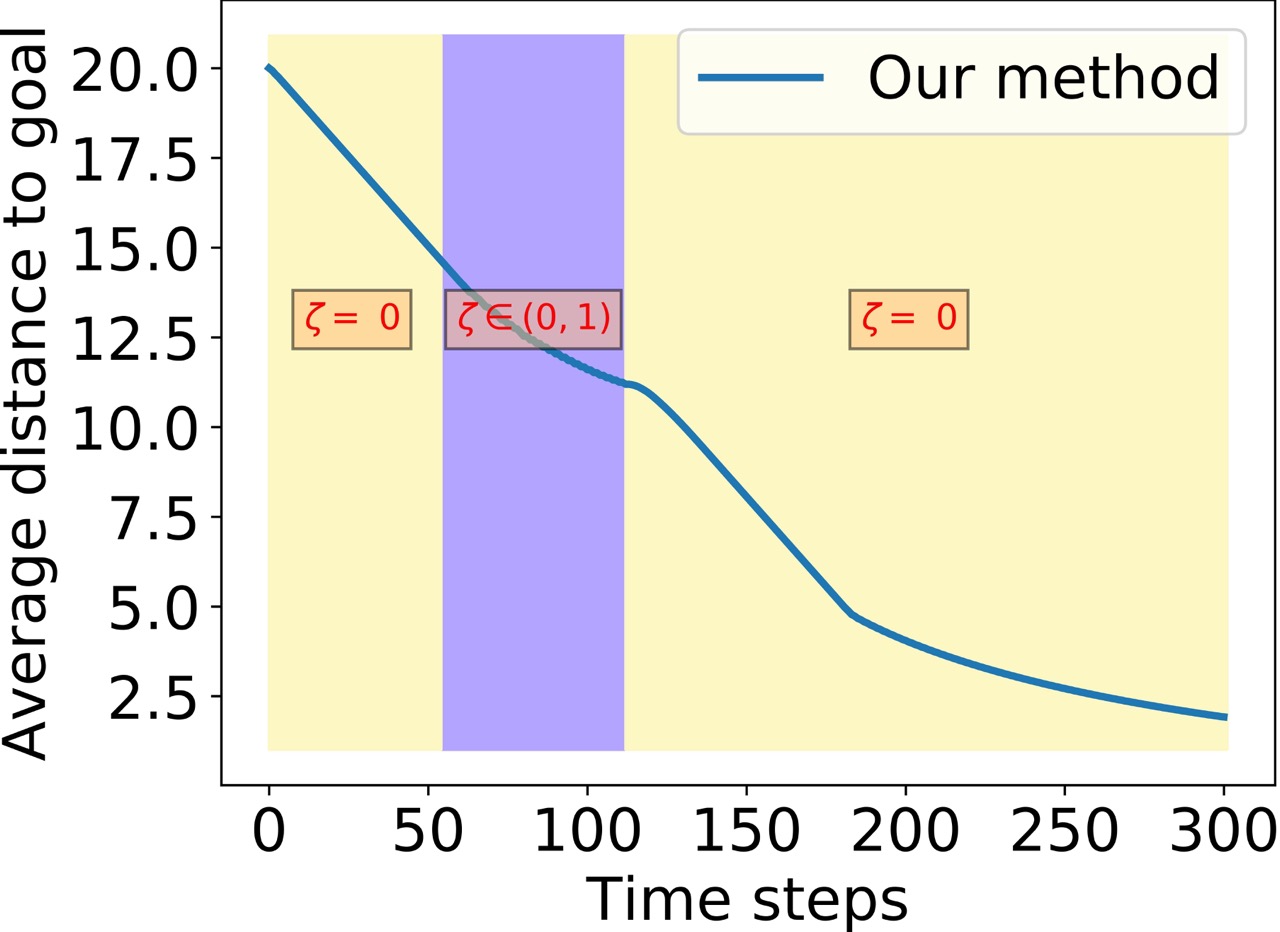}
 \caption{}
  \label{fig:zeta}
\end{subfigure}
\caption{ Average distances from current time step positions to goal positions of four agents. Fig.~\ref{fig:comparison} shows the performance of the CLF-CBF method, CLF-CBF with deadlock resolution method, and our method. Fig.~\ref{fig:zeta} shows the average $\zeta$ value over the initial 300 time steps in Fig~\ref{fig:comparison} with our method, where $\zeta = 0$ means the deadlock resolution is not activated and $\zeta \in (0,1)$ means the deadlock resolution is activated.}
\vspace{-0.3cm}
\end{figure}

Firstly, in Fig. \ref{fig:comparison}, it describes that with the aforementioned three methods, how the average distance from current robot positions to goals changes over time.
The blue line, orange line, and green line represent the average distances using the proposed decentralized control framework, CLF-CBF with deadlock resolution method in~\cite{reis2020control}, and CLF-CBF method, respectively. It is observed that the green line stops dropping around $t=70$, indicating that the agents fall into a deadlock using the CLF-CBF method. Both the CLF-CBF with deadlock resolution method and our method drive the agents toward their goals. However, it is obvious that our method would converge to the goal position earlier, which means that our method is more efficient than the CLF-CBF with deadlock resolution method.
In addition, we plot the average distance from the current robot positions to the goals
with the average $\zeta$ value shown in different periods with our method in Fig.~\ref{fig:zeta}, where $\zeta = \frac{1}{4}\Sigma_{i=1}^{4} \zeta_i$.

Secondly, we compare the agent behavior generated by the CLF-CBF method, CLF-CBF with deadlock resolution method \cite{reis2020control}, and our method in Fig.~\ref{fig:base} and Fig.~\ref{fig:prop}. 
As observed
from the trajectories
in Fig.~\ref{fig:deadlock}, four agents stall at the central parts and cannot head to their goals, which means they are getting into a deadlock. From the  CLF-CBF with deadlock resolution method's trajectory shown in Fig.~\ref{fig:baseline}, we can see the agents start to detour earlier rather than detour when it is necessary. 
With our method in Fig.~\ref{fig:prop}, we can observe that the agents would directly move to the goal when they are safe (far enough from other agents) and start to detour when it is necessary (close to each other). 

\begin{figure}[!htbp]
\centering
\begin{subfigure}{0.45\linewidth}
  \includegraphics[width = \linewidth]{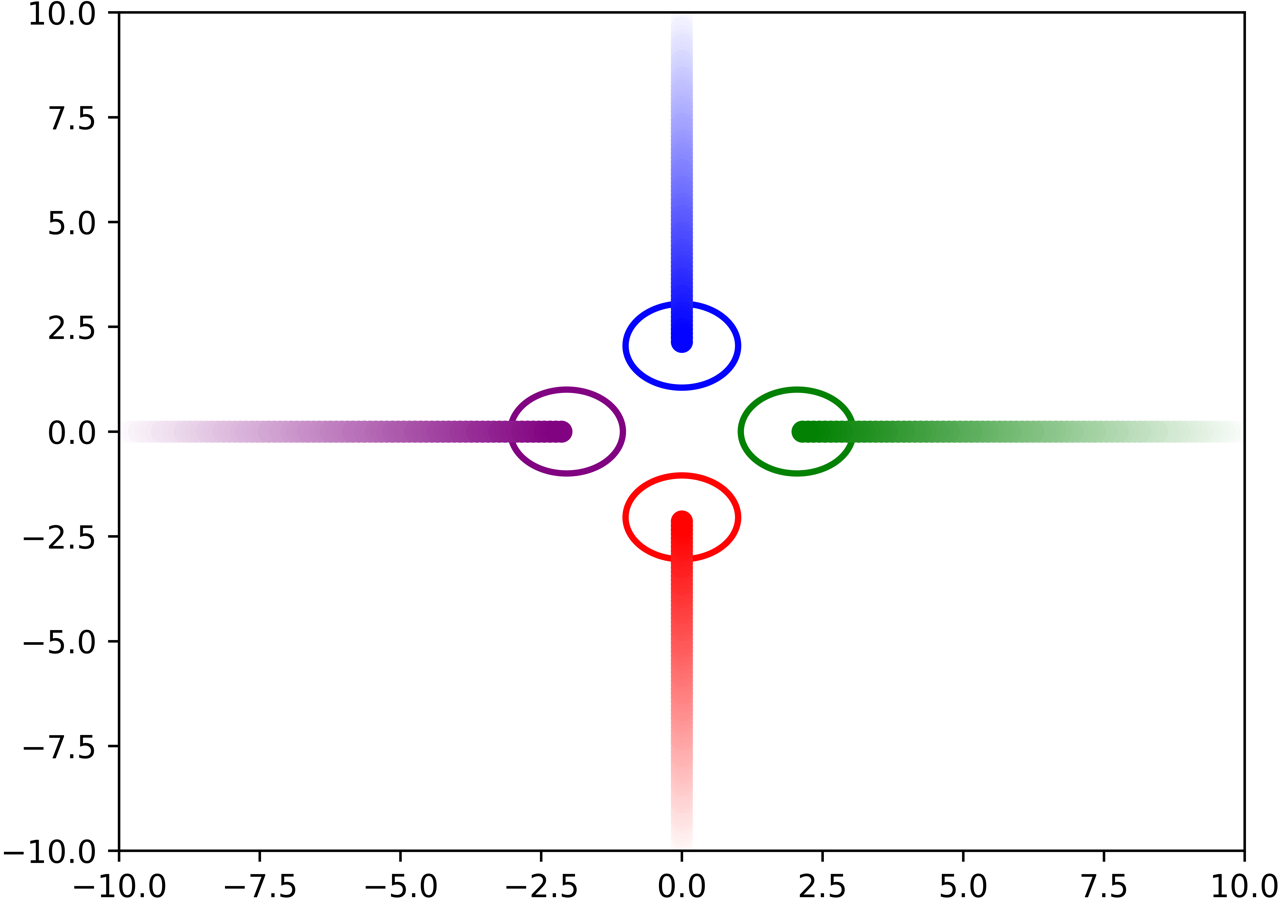}
 \caption{}
  \label{fig:deadlock}
\end{subfigure}
\begin{subfigure}{0.45\linewidth}
  \includegraphics[width = \linewidth]{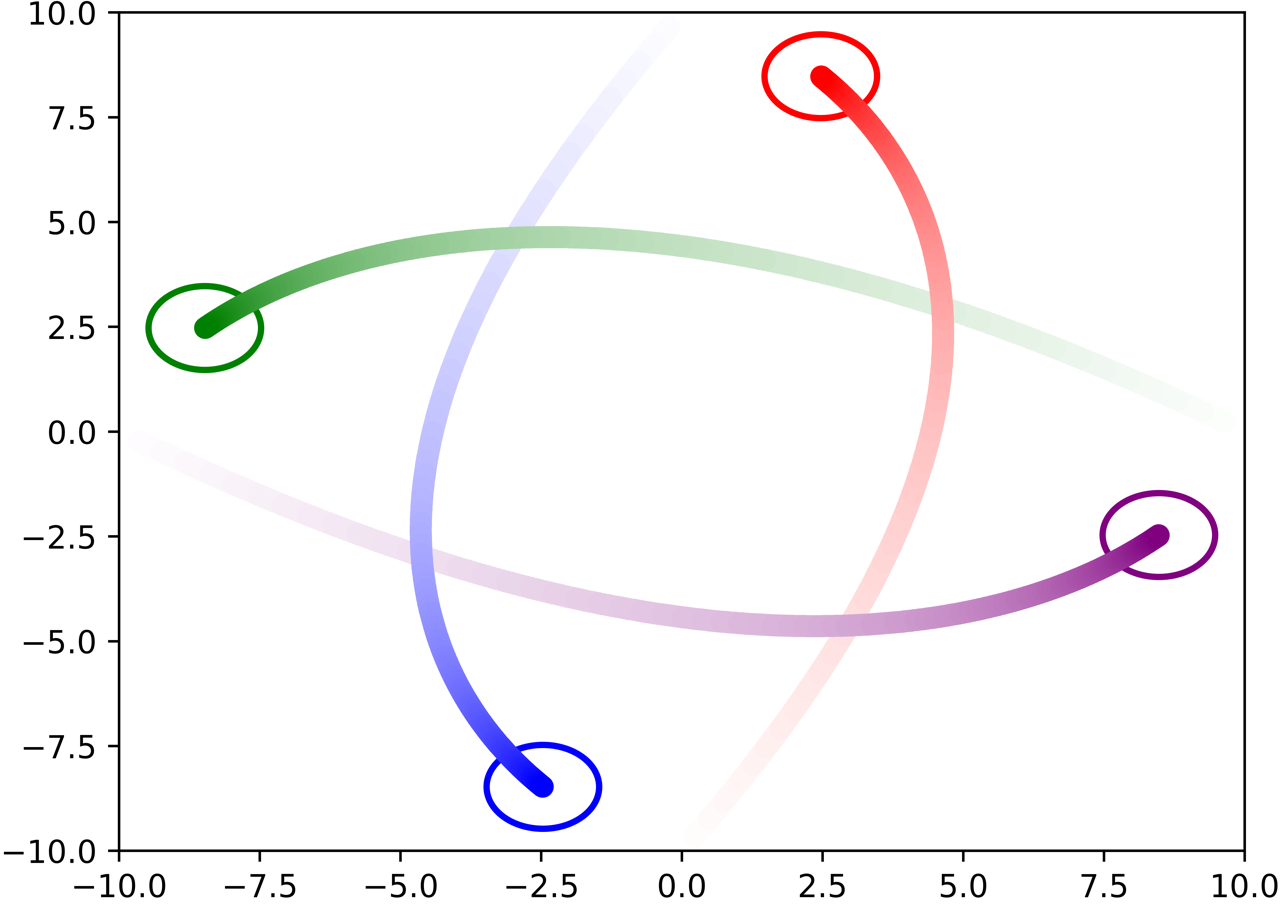}
 \caption{}
  \label{fig:baseline}
   \end{subfigure}
\caption{The multi-agent position swapping game using four single-integrator agents with the CLF-CBF method (Fig.~\ref{fig:deadlock}) and CLF-CBF with deadlock resolution method (Fig.~\ref{fig:baseline}).}
\label{fig:base}
\vspace{-0.7 cm}
\end{figure}

\begin{figure}[!htbp]
\centering
\begin{subfigure}{0.3\linewidth}
  \includegraphics[width = \linewidth]{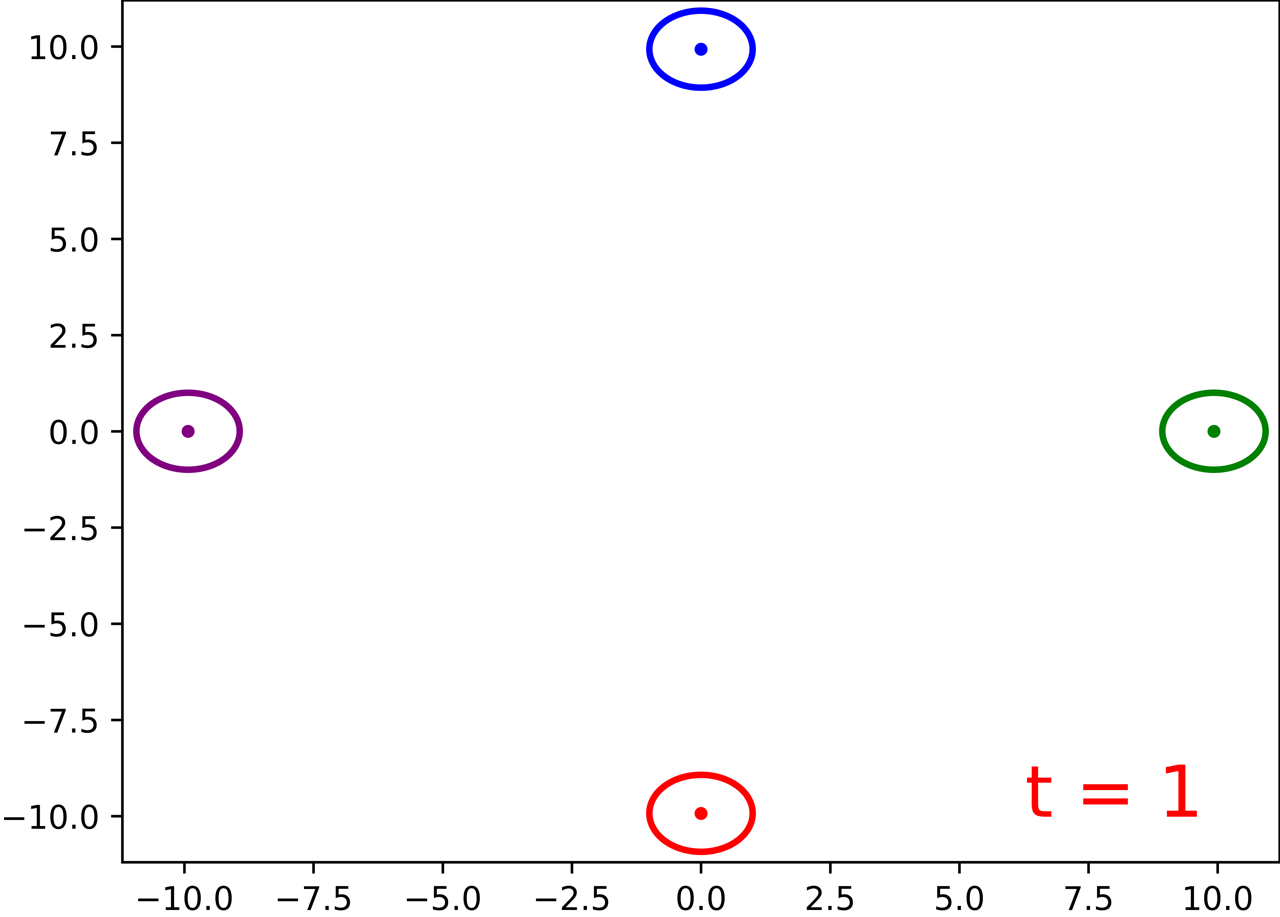}
\end{subfigure}
\begin{subfigure}{0.3\linewidth}
  \includegraphics[width = \linewidth]{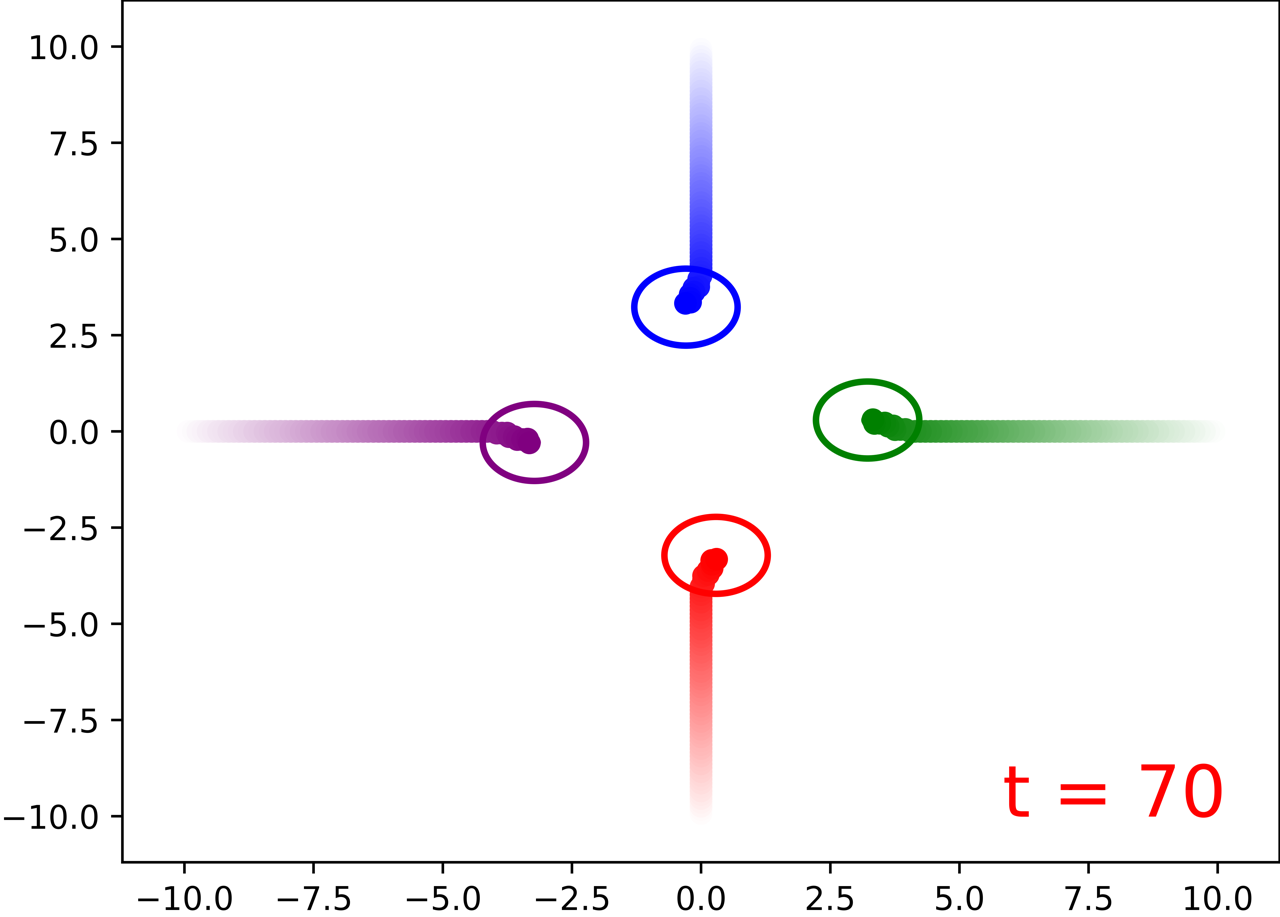}
   \end{subfigure}
\begin{subfigure}{0.3\linewidth}
  \includegraphics[width = \linewidth]{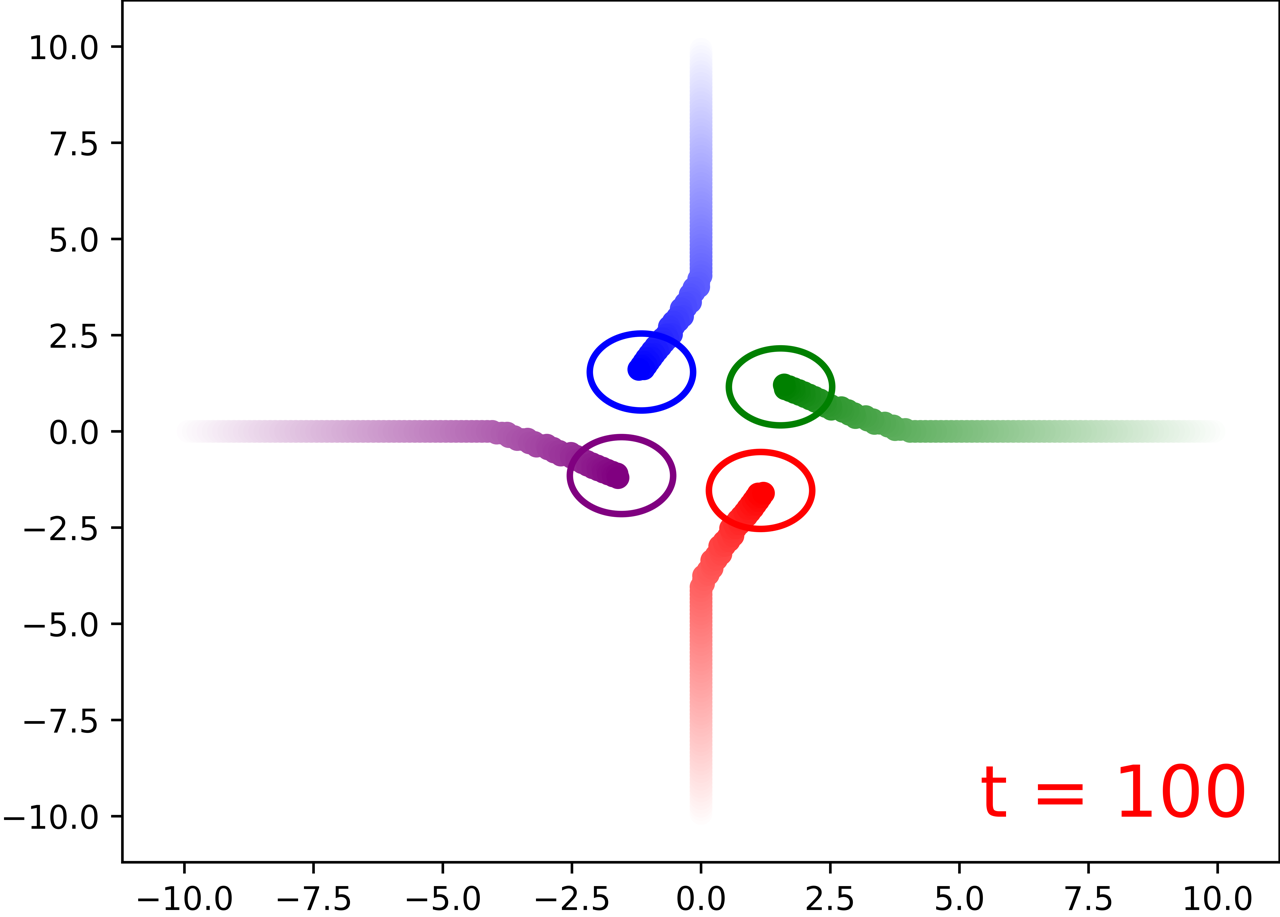}
   \end{subfigure}
\begin{subfigure}{0.3\linewidth}
  \includegraphics[width = \linewidth]{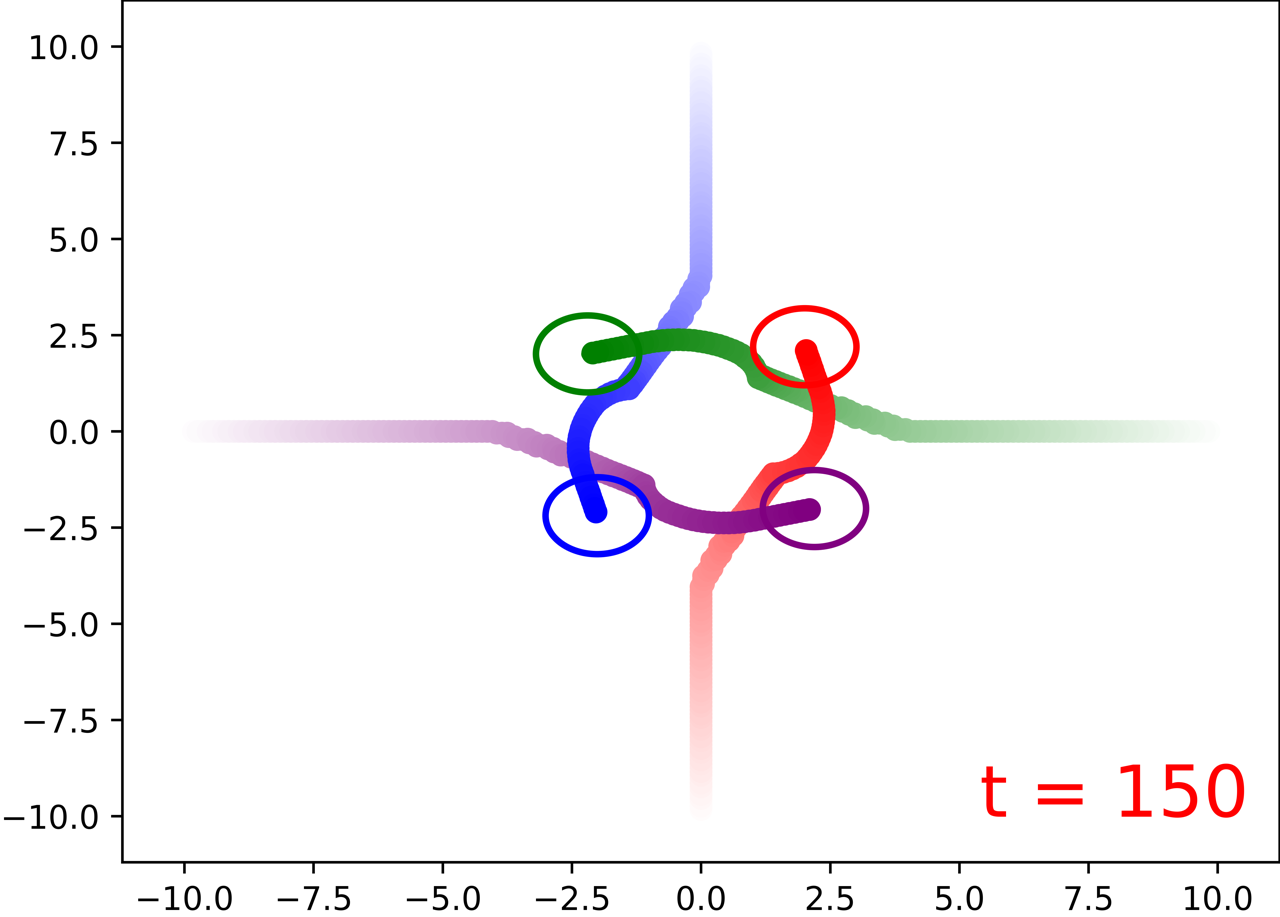}
   \end{subfigure}
\begin{subfigure}{0.3\linewidth}
  \includegraphics[width = \linewidth]{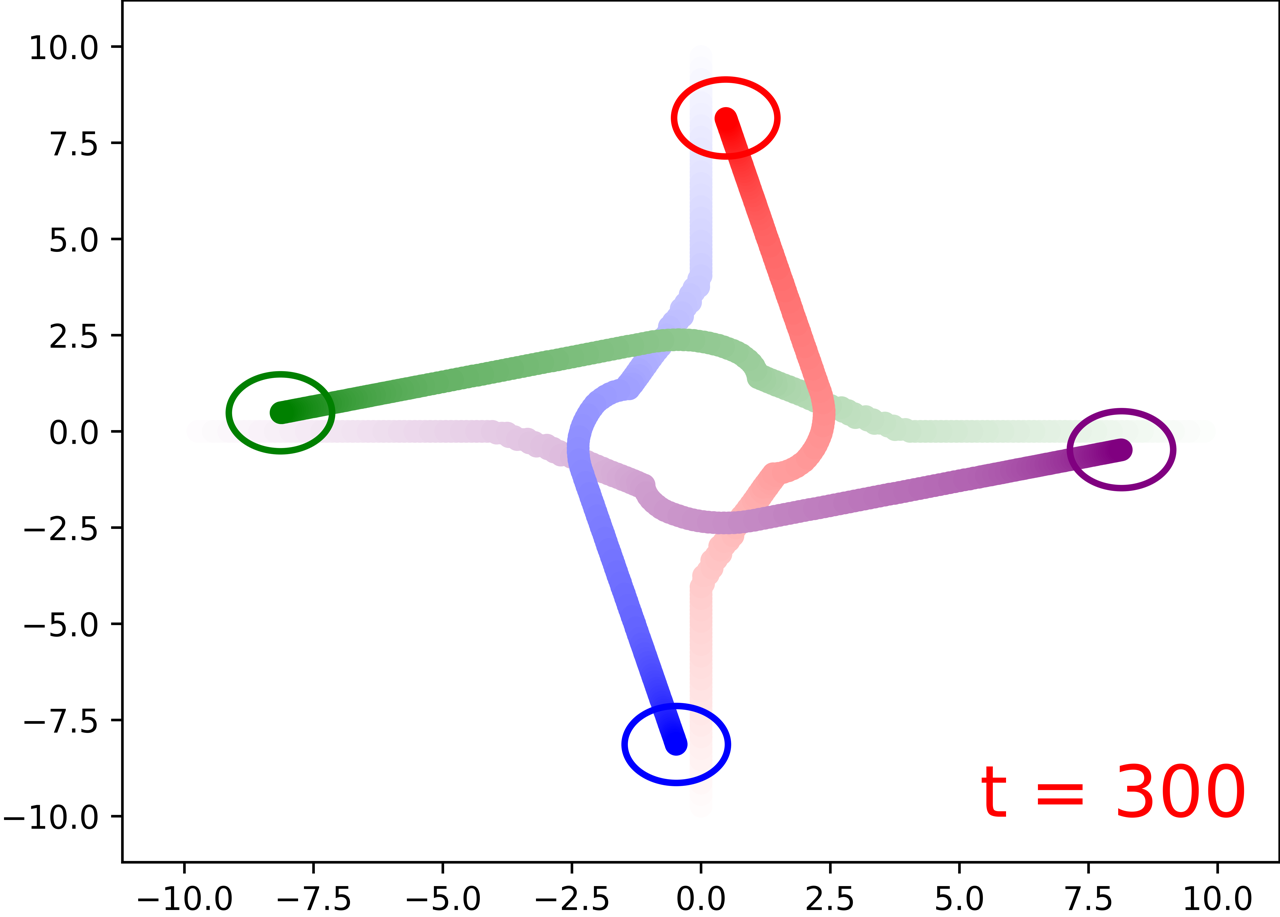}
   \end{subfigure}
\begin{subfigure}{0.3\linewidth}
  \includegraphics[width = \linewidth]{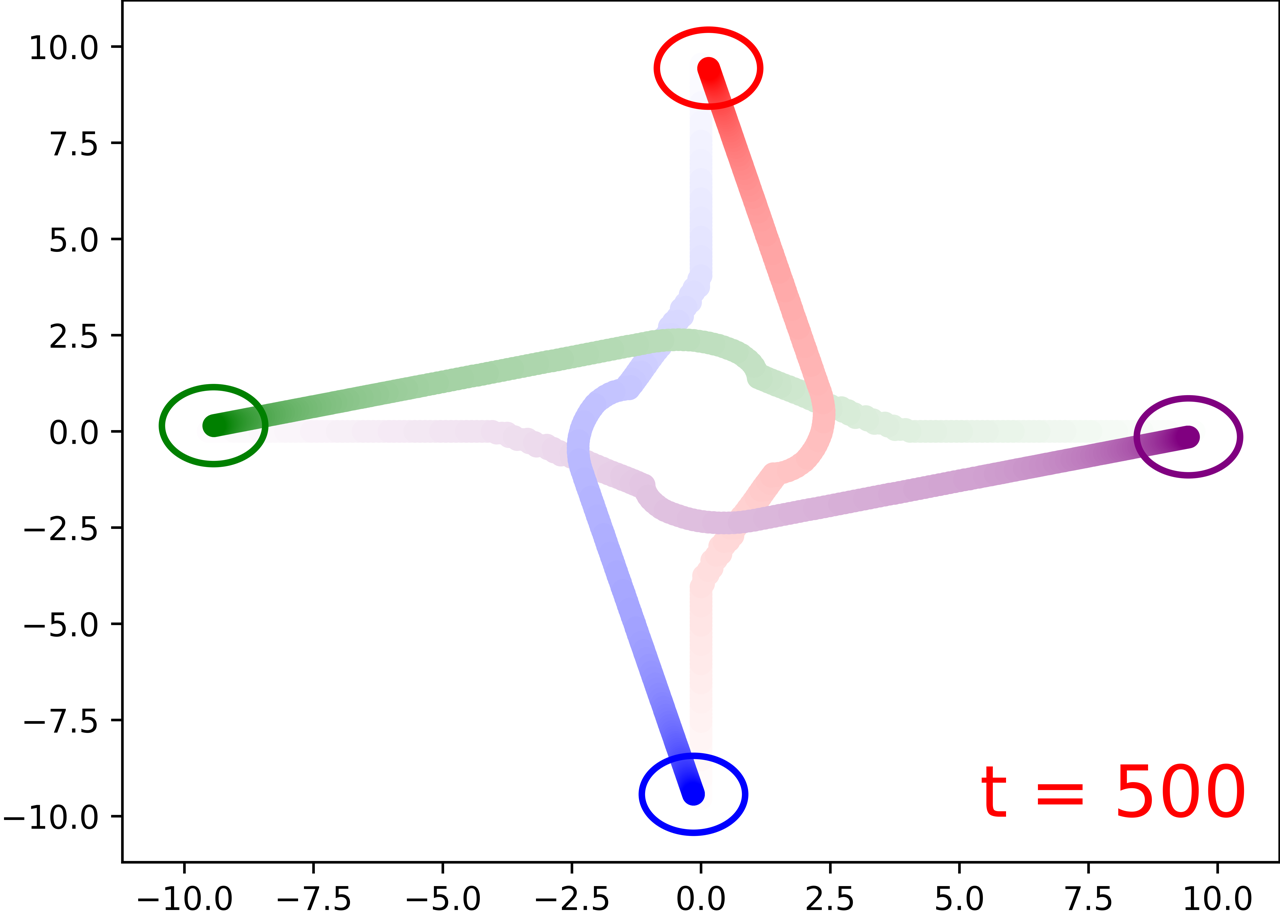}
   \end{subfigure}
\caption{The multi-agent position swapping game using four single-integrator agents with our method.}
\label{fig:prop}
\vspace{-0.7 cm}
\end{figure}

\subsection{Quantitative Results}
To verify the effectiveness of our method with different numbers of agents, we conducted experiments with up to 12 agents, ending at a fixed time step (i.e., $t = 500$). The agents are generated symmetrically across the environment. 
Fig.~\ref{fig:quanti} illustrates the superior performance of our algorithm in minimizing the distance to the target within a limited number of time steps. It clearly shows that our algorithm enables better task efficiency compared to the method in \cite{reis2020control}.

\begin{figure}[!htbp]
    \centering
    \includegraphics[width = 0.55\linewidth]{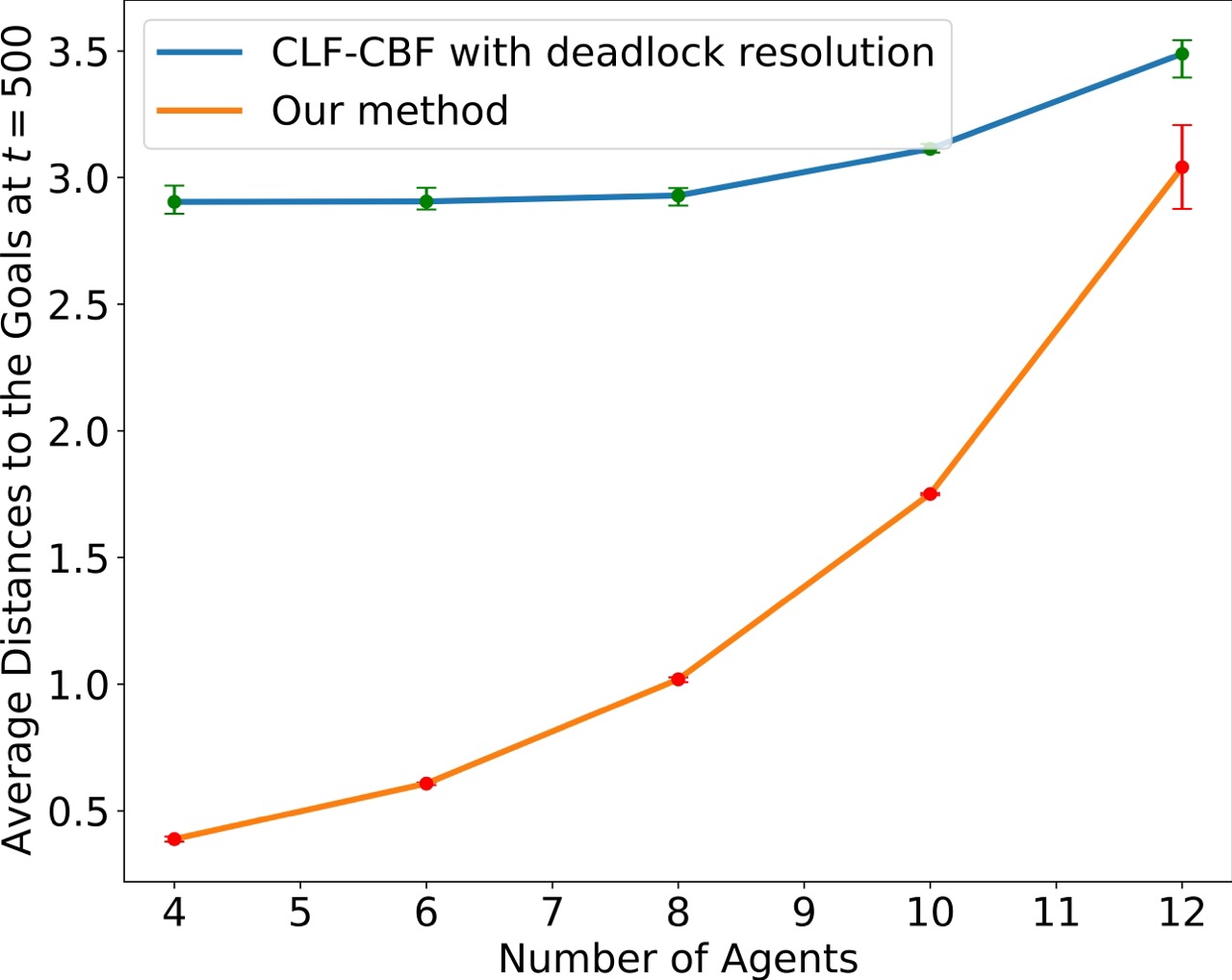}
    \caption{ Average distance to the target at $t = 500$, indicating the overall task efficiency. The figure is based on various scenarios, each involving a different number of agents starting from different initial positions.
    For each scenario, ten trials are conducted. 
    The error bars represent the min and max values.}
    \label{fig:quanti}
    \vspace{-0.4cm}
\end{figure}

\subsection{Real World Experiment}
We further use four Husarion ROSbot 2 PRO \cite{WinNT} robots as agents to play the multi-agent position-swapping game to verify the performance of our method. Besides, we put another robot at the center of the scenario as a static obstacle to verify the algorithm's effectiveness.
Several markers are set on the robot's body for the motion capture system OptiTrack acquiring and broadcasting the real-time robots' positions. We use the Robot Operating System(ROS) with a local internet with a router to establish connections between the OptiTrack and all four robots. After acquiring their positions, each robot uses the onboard computation resource to calculate the optimal control input for themselves. As shown in Fig.~\ref{fig:resolve}, the robots could automatically detour at the necessary positions. For the detailed implementation, we need to map the single-integrator dynamics to unicycle control commands for 
controlling the Husarion ROSbot 2PRO robots. Similarly to Robotarium \cite{wilson2020robotarium}, we use a near-identity diffeomorphism (NID) between the single-integrator and unicycle models \cite{olfati2002near} to enable the mapping from the single-integrator control to the unicycle control. Denote the robot state in the global frame as $[x, y, \theta]^{T}$. Then based on the optimized control $\mathbf{u}_i$ for single integrator dynamics using Eq.~\eqref{eq:prop}, we can get the control $\mathbf{v}_i$ for the unicycle dynamics:

{\footnotesize\begin{equation}
\mathbf{v}_i = 
\begin{bmatrix}
cos(\theta) & sin(\theta) \\
-\frac{1}{l}sin(\theta) & \frac{1}{l}cos(\theta)
\end{bmatrix} \mathbf{u}_i
\end{equation}}
$l$ is a small distance defining the mapped distance in alignment with the positive direction of the robot's orientation.

\begin{figure}
    \centering
    \includegraphics[width = 0.47\linewidth]{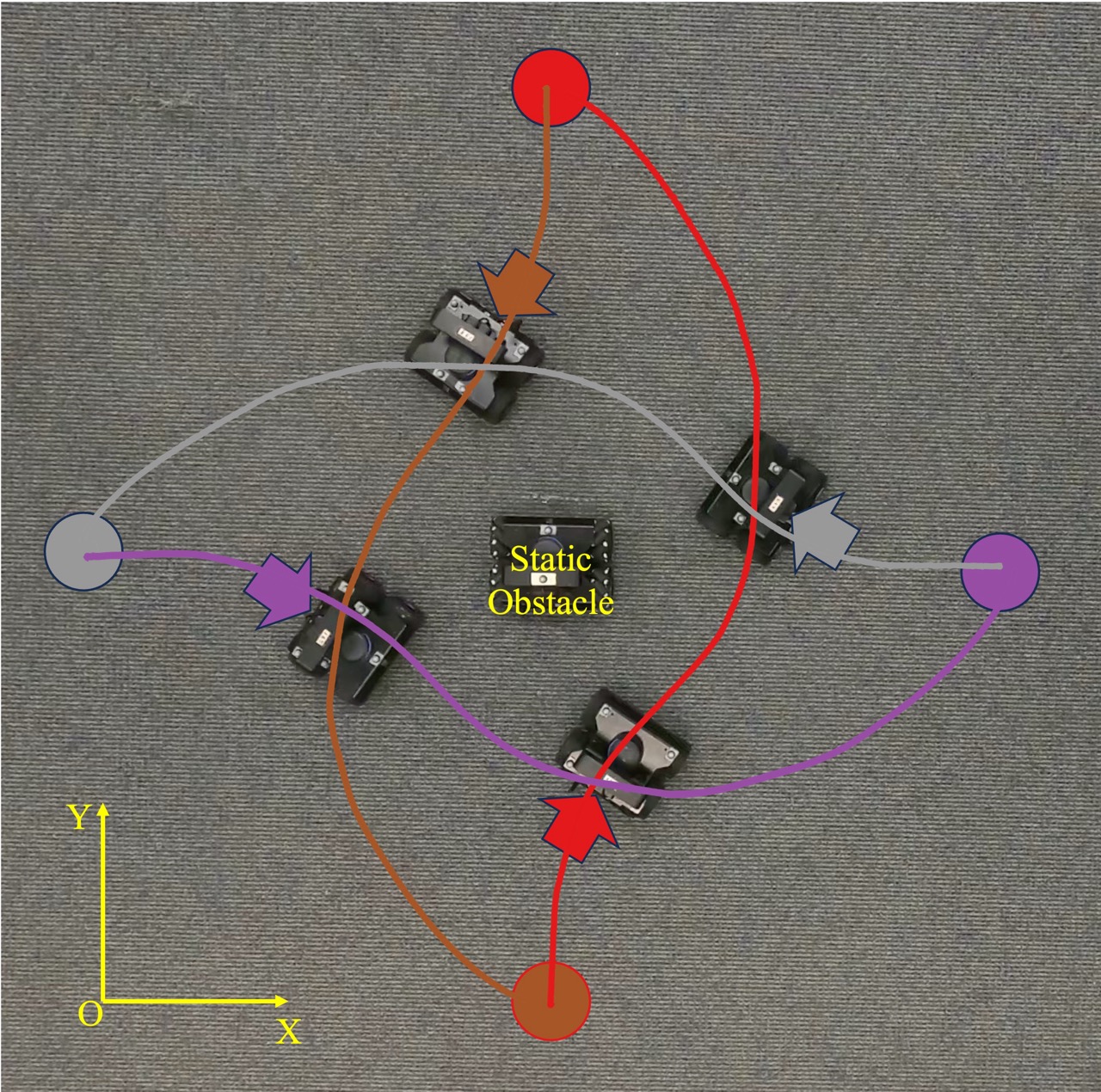}
    \caption{Four Husarion ROSbot 2 PRO robots \cite{WinNT} position swapping game with one static obstacle using our method. The filled circle with the same colored arrow represents the goal position of the corresponding robot. The curves are the corresponding real trajectories.}
    \label{fig:resolve}
    \vspace{-0.8cm}
\end{figure}

\section{Conclusion}
In this paper, we present a generalized control framework for the decentralized multi-agent system, which enables efficient task execution and deadlock resolution. 
In particular, we proposed a deadlock indicator function to determine when deadlock resolution is needed, and synthesized the control objectives of task-prescribed stability, collision avoidance and deadlock resolution to the unified framework. This allows individual agents to adaptively switch on/off the deadlock resolution when necessary while moving towards the goals with collision-free trajectories. The resulting decentralized multi-agent controller is shown to be valid and effective in numerical simulation and real-world experiments.

\bibliographystyle{IEEEtran}
\bibliography{ref}

\end{document}